\documentclass[a4paper]{article}

\usepackage{amsmath,amsthm,amssymb}

\newtheorem{theorem}{Theorem}
\newtheorem{lemma}[theorem]{Lemma}
\newtheorem{corollary}[theorem]{Corollary}  
\newtheorem{remark}[theorem]{Remark}

\usepackage{listings}


\usepackage{environ}
\usepackage{mathdots}

\newcommand{\repeattheorem}[1]{%
  \begingroup
  \renewcommand{\thetheorem}{\ref{#1}}%
  \expandafter\expandafter\expandafter\theorem
  \csname reptheorem@#1\endcsname
  \endtheorem
  \endgroup
}

\NewEnviron{reptheorem}[1]{%
  \global\expandafter\xdef\csname reptheorem@#1\endcsname{%
    \unexpanded\expandafter{\BODY}%
  }%
  \expandafter\theorem\BODY\unskip\label{#1}\endtheorem
}

\usepackage{tikz}

\tikzstyle{normalNode}=[draw=black, fill=black, circle, minimum size=1.35em, inner sep=0.5mm, scale=0.75]
\tikzstyle{safeNode}=[normalNode, fill=white, node contents={$\bullet$}]
\tikzstyle{labeledNode}=[normalNode, fill=white, minimum size=1.35em, inner sep=0.5mm]
\tikzstyle{normalEdge}=[black, thick, >=stealth]
\tikzstyle{secondaryEdge}=[normalEdge, dashed]

\usepackage[capitalise]{cleveref}

\def\mathrlap{\mathpalette\mathrlapinternal} 
\def\mathrlapinternal#1#2{\rlap{$\mathsurround=0pt#1{#2}$}}
\def\mathllap{\mathpalette\mathllapinternal} 
\def\mathllapinternal#1#2{\llap{$\mathsurround=0pt#1{#2}$}}

\newcommand{\elsum}[1]{\sum_{\mathrlap{#1}}\;}

\usepackage{xparse}
\DeclareDocumentCommand{\rcap}{ O{} m m }{%
	\ifx&#1&%
	\bar{u}_{#2}%
	\else
	\bar{u}_{#2,#1}%
	\fi%
	\ifx&#3&%
	\else
	(#3)%
	\fi%
}
\newcommand{\rcapEq}[3][]{\bar{u}_{#2,#1}(#3)}

\newcommand{\head}[1]{\operatorname{head}(#1)}
\newcommand{\tail}[1]{\operatorname{tail}(#1)}

\newcommand{\outarcs}[1]{\delta^{+}(#1)}
\newcommand{\val}[1]{\operatorname{val}(#1)}
\newcommand{\capacity}[1]{\operatorname{cap}(#1)}
\newcommand{\paths}[1][]{\mathcal{P}_{#1}}
\newcommand{\pathsVia}[2]{\mathcal{P}_{#1 \rightarrow #2}}
\newcommand{\suchthat}{\,:\,}
\newcommand{\NP}{$N\!P$}

\newcommand{\supp}[1]{\operatorname{supp}(#1)}
\newcommand{\NetRemoveArc}[1]{(V, A \setminus \{#1\})}
\newcommand{\NetRemoveSet}[1]{(V, A \setminus #1)}
\newcommand{\Rpaths}[1]{\mathcal{R}(#1)}
\newcommand{\maxreroute}{\textsc{Max RF}}
\newcommand{\maxstrictreroute}{\textsc{Max SRF}}
\newcommand{\minrcut}{\textsc{Min R-Cut}}
\newcommand{\forbiddenpairs}{\textsc{Forbidden Pairs $s$-$t$-Path}}
\newcommand{\rcut}{R-cut}
\newcommand{\LPstrict}{[\textup{LP}_{\textup{strict}}]}
\newcommand{\LPstrictD}{[\textup{D-LP}_{\textup{strict}}]}
\newcommand{\halfint}{half-integral}

\title{Rerouting flows when links fail\footnote{An extended abstract of this work appears in the proceedings of ICALP~2017.}}

\author{Jannik Matuschke\thanks{TUM School of Management and Department of Mathematics, Technische Universit\"at M\"unchen} \and S.\ Thomas McCormick\thanks{Sauder School of Business, University of British Columbia} \and Gianpaolo Oriolo\thanks{Dipartimento di Ingegneria Civile e Ingegneria Informatica, Universit\`a di Roma ``Tor Vergata''}}

\date{}

\begin{document}

\maketitle

\begin{abstract}
We introduce and investigate \emph{reroutable flows}, a robust version of network flows in which link failures can be mitigated by rerouting the affected flow.
Given a capacitated network, a path flow is reroutable if after failure of an arbitrary arc, we can reroute the interrupted flow from the tail of that arc to the sink, without modifying the flow that is not affected by the failure. Similar types of restoration, which are often termed ``local'', were previously investigated in the context of network design, such as min-cost capacity planning. In this paper, our interest is in computing maximum flows under this robustness assumption. An important new feature of our model, distinguishing it from existing max robust flow models, is that no flow can get lost in~the~network.

We also study a tightening of reroutable flows, called \emph{strictly reroutable flows}, making more restrictive assumptions on the capacities available for rerouting.
For both variants, we devise a reroutable-flow equivalent of an $s$-$t$-cut and show that the corresponding max flow/min cut gap is bounded by $2$.
It turns out that a strictly reroutable flow of maximum value can be found using a compact LP formulation, whereas the problem of finding a maximum reroutable flow is \NP-hard, even when all capacities are in $\{1, 2\}$. 
However, the tightening can be used to get a~$2$-approximation for reroutable flows.
This ratio is tight in general networks, but we show that in the case of unit capacities, every reroutable flow can be transformed into a strictly reroutable flow of same value.
While it is \NP-hard to compute a maximal integral flow even for unit capacities, we devise a surprisingly simple combinatorial algorithm that finds a half-integral strictly reroutable flow of value $1$, or certifies that no such solutions exits. 
Finally, we also give a hardness result for the case of multiple arc failures.

 \end{abstract}
 
 \section{Introduction}

	Network infrastructures for transportation, communication, or energy transmission are an important backbone of our society. However, they are also prone to failure or intentional sabotage, and in such cases it is desirable to quickly recover the service provided through the network. A crucial frequent requirement of actual network restoration techniques is that restoration is handled locally~\cite{grandoni2010stable}.  As a motivating example, consider a communication network in which data packets are routed along paths. When a link in the network fails, it is desirable to only reroute the traffic that is actually affected by the failure, i.e., those paths that traverse the failing link, without changing or rerouting any part of the flow that is not affected by the failure. Note that arbitrary rearrangement of the flow after a failure is in general more powerful, but it is both undesirable to interrupt customer service and hard to do so reliably and safely~\cite{sousa2007improving,metcalfe1976ethernet}.

	To cope with such a situation, we introduce the concept of reroutable network flows: A flow on $s$-$t$-paths is \emph{reroutable} if after failure of any arc $\bar{a} = (\bar{v}, \bar{w})$ in the network, we can reroute all flow that was traversing $\bar{a}$ from $\bar{v}$ to the sink $t$, while not changing any flow that was not affected by the interruption.
    Similar concepts were previously discussed in a few other papers \cite{brightwell2001reserving,chekuri2005building,phillips1999approximation,shepherd2009single}, but with an emphasis on network design issues, e.g., minimizing the cost of the installed capacity. In contrast, our interest is in computing maximum flows (but we point out that a potential application are feasibility/separation subroutines for capacity reservation). Note that in this setting, we cannot simply send a standard maximum flow, as we need to leave space for rerouting. Before we discuss our findings and better relate them to existing literature, let us formalize the definition of our model.

\paragraph{Network flows} Let $D = (V, A)$ be a digraph with source $s \in V$, a sink $t \in V$ and arc capacities $u \in \mathbb{R}_+^A$.
	Let $\paths \subseteq 2^A$ be the set of simple\footnote{All our results also work for the case that $\paths$ contains non-simple paths, but we restrict to simple paths for ease of notation.} $s$-$t$-paths in $D$.
For arcs $a, \bar{a} \in A$, define $$\paths[a] := \{P \in \paths \suchthat a \in P\} \quad \text{and} \quad \pathsVia{\bar{a}}{a} := \{P \in \paths \suchthat a, \bar{a} \in P, \, \bar{a} \prec_{P} a\},$$
where $\bar{a} \prec_{P} a$ means that $P$ traverses $\bar{a}$ before $a$. 
An \emph{$s$-$t$-flow} is a vector $x \in \mathbb{R}_+^{\paths}$ that assigns a flow value $x(P) \geq 0$ to each $P \in \paths$ such that the arc flow values $x(a) := \sum_{P \in \paths[a]} x(P)$ fulfill the capacity constraint $x(a) \leq u(a)$ for all~$a \in A$. The value of a flow $x$ is $\val{x} := \sum_{P \in \paths} x(P)$.

\paragraph{Reroutable flows}
Let $x$ be an $s$-$t$-flow. If an arc $\bar{a} = (\bar{v}, \bar{w}) \in A$ fails, all flow on paths containing the failing arc gets interrupted when it reaches $\bar{v}$.
For any $a \in A \setminus \{\bar{a}\}$, we define the \emph{available capacity} of $a$ after failure of $\bar{a}$ by
$$\rcap[\bar{a}]{x}{a} := u(a) - \elsum{P \in \paths[a] \setminus \pathsVia{\bar{a}}{a}} x(P).$$
A \emph{rerouting} of $x$ for the failing arc $\bar{a}$ is a $\bar{v}$-$t$-flow $x_{\bar{a}}$ of value $x(\bar{a})$ in $(V, A \setminus \{\bar{a}\})$ with capacities $\bar{u}_{x,\bar{a}}$. The flow $x$ is \emph{reroutable} if for every failing arc $\bar{a} \in A$ there is a rerouting $x_{\bar{a}}$ of $x$.
    
\paragraph{Strictly reroutable flows}
A rerouting $x_{\bar{a}}$ of a flow $x$ for a failing arc $\bar{a}$ is \emph{strict} if $x_{\bar{a}}(a) \leq \rcap{x}{a} := u(a) - x(a)$ for every $a \in A \setminus \{\bar{a}\}$. We say that $x$ is \emph{strictly reroutable} if for every failing arc $\bar{a} \in A$ there is a strict rerouting of $x$.
\bigskip

Strictly reroutable flows are both a helpful tool for computing reroutable flows and interesting in their own right, in situations where more conservative assumptions have to be made on the capacities available for rerouting.
 A natural question is what is the maximum flow value that can be sent by a (strictly) reroutable flow in a given network. We denote the corresponding optimization problem as {\maxreroute} and {\maxstrictreroute}, respectively.

\paragraph{Path flows vs.\ arc flows}
Our reroutable flow model is defined for path flows, which is a common assumption in many robust flow models~\cite{aggarwal2002multiroute,AnejaChandrasekaranNair2001,gottschalk2016robust,matuschke2016protecting}. It is well-known that network flows allow for two different representations: By specifying a value for each path, as is done in this paper, or by specifying the flow on each arc and requiring flow conservation at each node of the network. Arc flow values can easily be obtained from a given path flow. Conversely, every arc flow can be decomposed into flow on paths, but in general this decomposition is not unique; see, e.g.,~\cite{ahuja1993network}. There are flow problems where the path decomposition does not play a role, e.g., the maximum flow problem or the minimum cost flow problem, and others where it does, e.g., robust flows~\cite{AnejaChandrasekaranNair2001}, length-bounded flows~\cite{baier2006length}, or the computation of earliest arrival flows~\cite{minieka1973maximal}.
 For our model, it turns out that {\maxreroute} falls into the first category, whereas {\maxstrictreroute} falls into the second.

\subsection{Our results}

\paragraph{Complexity of the problems (\cref{sec:strict-lp,sec:strict-gap})}
We observe that {\maxstrictreroute} can be solved in polynomial time by formulating it as a linear program.
In contrast, {\maxreroute} is \NP-hard, even when $u(a) \in \{1, 2\}$ for all $a \in A$. On the positive side, by showing that the maximum value of a reroutable flow is at most twice as large as the maximum value of a strictly reroutable flow, we obtain a $2$-approximation for {\maxreroute} for arbitrary capacities. The problem can further be solved exactly in unit capacity networks (see below).

\paragraph{Max flow/min cut gap (\cref{sec:min-cut})} 
Max flow/min cut results play a central role in network flow theory.
We devise a combinatorial upper bound for the maximum reroutable flow value, called \emph{\rcut}, and prove that the corresponding flow/cut gap for both reroutable and strictly reroutable flows is bounded by $2$. In fact, our proof is constructive and provides a combinatorial $2$-approximation algorithm for the minimum capacity {\rcut} problem.

\paragraph{Unit capacity networks (\cref{sec:unit-capacities})}
We consider the case of unit capacities.
It turns out that in this case, {\maxreroute} and {\maxstrictreroute} are equivalent.
Our proof is based on a careful uncrossing argument that allows to transform any reroutable flow into a strictly reroutable flow.

\paragraph{Computing (half-)integral solutions (\cref{sec:integral-solutions})}
A common property of many flow problems is the existence of an integral optimal solution when capacities are integral. In the case of reroutable flows, this property does not hold. In fact, if we require flow to be integral, the problem becomes \NP-hard, even for sending a single unit of flow in a unit capacity network. 
However, for this special case, we devise a simple combinatorial algorithm that computes a half-integral solution or certifies that no flow of value $1$ exists. Via our max flow/min cut analysis we also show how to compute $2$-approximate half-integral solutions.

\paragraph{Multiple arc failures (\cref{sec:multiple-failures})}
We consider the natural generalization of our problems to multiple simultaneous arc-failures.
We show that in this case both variants of the problem are \NP-hard, even when only two arcs can fail and all arcs have unit capacity. All hardness results in this paper are based on reduction from an intermediary problem, called {\forbiddenpairs}. They are therefore grouped together in \cref{sec:hardness}.

\subsection{Related work}
As we already pointed out above, ``local'' rerouting schemes, i.e.,  schemes that only change flow affected by the failure, have been investigated in network design. A routing scheme in which flow has to be sent along arc-disjoint paths was investigated in \cite{brightwell2001reserving}, see also \cite{shepherd2009single}. The problem of finding a local rerouting from the tail to the head of a failed arc was investigated in \cite{chekuri2005building} and  \cite{phillips1999approximation}. However, in all these papers the focus was on min-cost capacity planning. 

Concepts that deal with the maximization of flow subject to robustness constraints commonly fall under the moniker of \emph{robust flows}. 
Aggarwal and Orlin~\cite{aggarwal2002multiroute} studied \emph{$k$-route flows}. Such a flow is a conic combination of elementary flows, each of which consists of a uniform flow along $k$ disjoint paths. Because of this structure, the failure of any arc can only destroy a $1/k$ fraction of the flow. A maximum $k$-route flow can be computed in polynomial time by means of a parametric max flow problem.
Another classic model is the \emph{maximum robust flow problem}: Here, the goal is to find a path flow that maximizes the surviving flow after a worst-case failure of $k$ arcs. Aneja et al.~\cite{AnejaChandrasekaranNair2001} showed that for $k=1$ both an optimal fractional and an optimal integral solution can be found in polynomial time.
If $k$ is not bounded by a constant the problem is \NP-hard~\cite{DisserMatuschke2016}, but the complexity for any constant value $k \geq 2$ is open.
Bertsimas et al.~\cite{bertsimas2013power} provide an $\Omega(1/k)$-approximation algorithm for the maximum robust flow. Robust flows are closely related to \emph{network flow interdiction}, which takes a dual perspective: The goal is to find a subset of arcs whose removal minimizes the maximum flow value in the remaining network; see the recent article by Chestnut and Zenklusen~\cite{chestnut2015hardness} for an up-to-date overview of this topic.

To the best of our knowledge, the only other flow maximization model that allows for adjustment after the failure are \emph{adaptive flows}, first introduced by Bertsimas et al.~\cite{BertsimasNasrabadiStiller2013}: In the first step, an arc flow is specified. After failure of $k$ arcs, a new flow is sent, with the flow value on every arc being bounded by the original flow value. 
Note that adaptive flows differ from reroutable flows in two important aspects:
Adaptive flows allow flow to be `lost'~(the flow value after the failure is lower than the original flow value), whereas in reroutable flows all flow has to reach the sink. Furthermore, adaptive flows can reconfigure the flow in the entire network, whereas in reroutable flows, only the flow affected by the failure can be~rerouted.

Another model closely related to reroutable flows is the \emph{online replacement path} problem~(ORP) introduced by Adjiashvili et al.~\cite{oriolo2013online}. The ORP is a generalization of the shortest path problem:
Given a digraph with costs on the arcs, we have to specify an $s$-$t$-path.
Along the path, we may encounter a failing arc $\bar{a} = \{\bar{v}, \bar{w}\}$, and we have to find a replacement path from $\bar{v}$ to $t$ avoiding $\bar{a}$. The goal is to minimize the total traveled distance, assuming $\bar{a}$ is chosen by an adversary.
Adjiashvili et al.~\cite{oriolo2013online} show that the ORP can be solved in polynomial time, even when a constant number of arcs fail.

\section{LP formulation, approximation, and\\ max flow/min cut}\label{sec:approximation}

In this section, we discuss the complexity of the two problems and provide bounds on the gap between {\maxreroute} and {\maxstrictreroute}. We also introduce an analogue to minimum cuts for reroutable flows and bound the corresponding duality gap. At the end of the section, we show that all our bounds are tight.

\subsection{Complexity of {\maxreroute} and {\maxstrictreroute}}\label{sec:strict-lp}

We now consider an LP formulation for {\maxstrictreroute}.
For $\bar{a} \in A$, let $\Rpaths{\bar{a}}$ be the set of all $\tail{\bar{a}}$-$t$-paths in $\NetRemoveArc{\bar{a}}$, which are exactly the paths that a rerouting for failing arc $\bar{a}$ can use.

\begin{alignat*}{3}
\LPstrict \qquad\qquad \mathllap{\max} \quad && \elsum{P \in \paths} x(P) & \\
\mathllap{\text{s.t.}} \quad && \elsum{P \in \paths[a]} x(P) \, + \, \elsum{\quad R \in \Rpaths{\bar{a}} \suchthat a \in R} x_{\bar{a}} (R) & \ \leq \ u(a) & \quad \forall\; a,\bar{a} \in A\\
&& \elsum{P \in \paths[\bar{a}]} x(P) \, - \, \elsum{R \in \Rpaths{\bar{a}}} x_{\bar{a}}(R) & \ = \ 0 & \quad \forall\; \bar{a} \in A\\
&& x, x_{\bar{a}} & \geq 0 & \quad \forall\; \bar{a} \in A
\end{alignat*}

The first set of constraints bound the capacities for each rerouting; note in particular that for~$\bar{a} = a$, the second term becomes $0$, ensuring $x(a) \leq u(a)$ for all $a \in A$. The second set of constraints ensures that the rerouting flow $x_{\bar{a}}$ has value $x(\bar{a})$.

For our discussion, it will also be useful to consider the dual of $\LPstrict$. We introduce dual variables $y_{\bar{a}}(a)$ for every $a, \bar{a} \in A$ and $z(\bar{a})$ for every $\bar{a} \in A$.

\begin{alignat*}{3}
\LPstrictD \qquad\qquad \mathllap{\min} \quad && \elsum{a \in A} \elsum{\bar{a} \in A} u(a) y_{\bar{a}}(a) & \\
\mathllap{\text{s.t.}} \quad && \elsum{a \in P} \Big(z(a) + \elsum{\bar{a} \in A} y_{\bar{a}}(a)\Big) & \ \geq \ 1 & \quad \forall\; P \in \paths\\
&& \elsum{a \in P} y_{\bar{a}}(a) \, - \, z(\bar{a}) & \ \geq \ 0 & \quad \forall\; \bar{a} \in A,\ P \in \Rpaths{\bar{a}} \\
&& y_{\bar{a}}(a) & \geq 0 & \quad \forall\; a, \bar{a} \in A
\end{alignat*}

Although $\LPstrict$ has an exponential number of variables, it can be solved in polynomial time via dual separation.
\begin{theorem}\label{thm:strict-poly}
  {\maxstrictreroute} can be solved in polynomial time.
\end{theorem}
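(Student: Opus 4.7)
The plan is to handle the exponential number of variables in $\LPstrict$ by applying the ellipsoid method to the dual $\LPstrictD$, which has only $|A|^2 + |A|$ variables and whose exponential family of constraints admits a polynomial-time separation oracle built from shortest-path subroutines. Once the dual is solved, a primal optimum is recovered by restricting to the polynomially many primal variables corresponding to the separating hyperplanes generated during the ellipsoid run.

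The dual constraints split into two groups. Given a candidate point $(y,z)$, set $w(a) := z(a) + \sum_{\bar{a} \in A} y_{\bar{a}}(a)$. The first group, indexed by $P \in \paths$, holds for all $P$ iff every $s$-$t$-path has $w$-weight at least $1$, so separation reduces to a single shortest-path computation. The second group, indexed by $(\bar{a}, P)$ with $P \in \Rpaths{\bar{a}}$, can be handled by iterating over the $|A|$ choices of $\bar{a}$ and computing a shortest $\tail{\bar{a}}$-$t$-path in $(V, A \setminus \{\bar{a}\})$ with nonnegative weights $y_{\bar{a}}(\cdot)$, which Dijkstra solves; the corresponding constraint is violated iff this shortest path is strictly shorter than $z(\bar{a})$.

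The awkward point is that $z$ is the dual of an equality constraint and hence unrestricted in sign, so the weights $w(a)$ in the first separation may be negative, and shortest \emph{simple} $s$-$t$-path with negative weights is \NP-hard. I would resolve this by invoking the footnote at the start of the paper, which permits $\paths$ to contain non-simple paths: the separation problem then becomes a minimum-weight $s$-$t$-walk, which Bellman-Ford solves in polynomial time. If a negative cycle reachable on an $s$-$t$-walk is detected, the corresponding walk is returned as a canonical violated dual constraint; otherwise, the shortest $s$-$t$-walk (which is then a simple path) is compared against the threshold $1$.

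The main obstacle is precisely this sign issue. Once it is dealt with by working with walks, the rest is a routine application of LP duality and of the ellipsoid method with a separation oracle; in particular, the compact primal restricted to the paths returned during separation has the same optimal value as $\LPstrict$ by complementary slackness with the computed dual optimum, and can be solved directly in polynomial time.
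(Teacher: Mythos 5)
Your proposal follows essentially the same route as the paper: solve $\LPstrictD$ by the ellipsoid method with a shortest-path separation oracle for each of the two constraint families, then recover a primal optimum by restricting $\LPstrict$ to the columns generated during separation. The one place you diverge is the sign issue for $z$, which you are right to flag --- the paper's proof silently computes shortest paths with costs $c_{y,z}(a) = z(a) + \sum_{\bar a} y_{\bar a}(a)$ even though $z$ is a priori free. However, your fix (passing to non-simple paths and Bellman--Ford on walks) is heavier than necessary and introduces its own complications: it changes the index set of the first constraint family to an infinite set of walks, so you would additionally have to argue that the walk-LP and the simple-path-LP have the same optimum (true, by shortcutting, but it needs saying) before the column-generation step is licensed. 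The cleaner and standard resolution is to observe that the constraint $z \geq 0$ can be added to $\LPstrictD$ without changing its optimal value: replacing $z(\bar a)$ by $\max\{z(\bar a),0\}$ preserves the second family of constraints (since $y \geq 0$), only helps the first family, and leaves the objective unchanged. (The paper itself uses exactly this truncation later, when relating R-cuts to integral dual solutions.) With $y, z \geq 0$ all arc costs in both separation subproblems are nonnegative and Dijkstra suffices throughout. With that substitution your argument is complete and matches the paper's.
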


\begin{proof}
We observe that the separation problem to $\LPstrictD$ can be solved by a sequence of shortest path computations: 
Consider a point $y, z$.
A violation of the first set of constraints can be detected by computing the shortest $s$-$t$-paths with respect to arc costs $c_{y,z}(a) := z(a) + \sum_{\bar{a} \in A} y_{\bar{a}}(a)$.
If the shortest path has cost smaller than $1$, it corresponds to a violated inequality, otherwise $y, z$ fulfills the inequalities.
 A violation of the second set of constraints can be detected by computing a shortest $\tail{\bar{a}}$-$t$-path for every $\bar{a} \in A$ with respect to arc costs $y_{\bar{a}}(a)$. If the shortest path for some $\bar{a}$ has cost smaller than $z(\bar{a})$ the constraint is violated for that path. Otherwise $y, z$ fulfills the inequalities.
 
 Hence we can obtain an optimal solution for $\LPstrictD$ using the equivalence of optimization and separation. By restricting $\LPstrict$ to the paths corresponding to inequalities generated during the solution of the dual, we obtain a primal solution of equal value.
\end{proof}

\begin{remark}
An alternative way to obtain a polynomial algorithm for {\maxstrictreroute} is to observe that the capacities $\rcap{x}{a} = u(a) - x(a)$ available for strict rerouting only depend on the arc flow values. Hence, we can formulate {\maxstrictreroute} as a compact LP with arc flow variables. Then any path decomposition  of the resulting arc flow is a maximum strictly reroutable flow.
\end{remark}

For the special case of unit capacity networks, we show in \cref{sec:unit-capacities} that an optimal solution to $\LPstrict$ is also optimal for {\maxreroute}.

\begin{theorem}
  For $u \equiv 1$, {\maxreroute} can be solved in polynomial time.
\end{theorem}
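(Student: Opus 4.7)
The plan is to reduce the unit capacity case of {\maxreroute} to {\maxstrictreroute}, which is already solvable in polynomial time by \cref{thm:strict-poly}. The crux is the following structural claim, to be proved in \cref{sec:unit-capacities}: when $u \equiv 1$, every reroutable flow can be transformed into a strictly reroutable flow of the same value. Granting this claim, the theorem follows immediately: solve $\LPstrict$ to obtain an optimal strictly reroutable flow $x$, which is a fortiori reroutable, while no reroutable flow can have value exceeding $\val{x}$ by the claim.

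To prove the structural claim, the natural approach is a tail-swapping uncrossing argument. Given a reroutable flow $x$ with reroutings $(x_{\bar{a}})_{\bar{a} \in A}$, a rerouting $x_{\bar{a}}$ fails to be strict exactly when there exist an arc $a$, an $x$-path $P$ with $a \in P$ and $\bar{a} \not\prec_P a$, and a rerouting path $R \in \Rpaths{\bar{a}}$ through $a$ for which the available capacity is exceeded. The natural repair is to swap the tails of $P$ and $R$ at $a$: send the $x$-flow along $R[a,t]$ (which avoids $\bar{a}$) and let the rerouting use $P[a,t]$. Under unit capacities I expect each such swap to strictly decrease a suitable combinatorial potential (for instance, a sum over failing arcs of the total load violating available capacity), so the procedure terminates with a strictly reroutable flow of the same value.

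The main obstacle is that altering $x$ to accommodate one failing arc $\bar{a}$ changes the rerouting requirements for every other $\bar{a}' \neq \bar{a}$, and a naive iterative swap could reintroduce non-strictness elsewhere. The ``careful uncrossing'' alluded to in the introduction must therefore either prescribe a specific order of swaps that provably makes monotone progress, or produce the target flow by a single global redistribution. The key leverage in the unit capacity setting is that arc flows are $0/1$-valued, so the interaction between original paths and rerouting paths is purely combinatorial and admits a clean structural analysis. This is the technical heart of \cref{sec:unit-capacities}; the present theorem is its polynomial-time corollary combined with \cref{thm:strict-poly}.
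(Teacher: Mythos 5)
Your top-level argument is the paper's: the theorem is obtained by combining the polynomial-time solvability of {\maxstrictreroute} (\cref{thm:strict-poly}) with the claim that, for $u \equiv 1$, every reroutable flow can be converted into a strictly reroutable flow of equal value (\cref{thm:equivalence}). That reduction is correct. The gap is entirely in your sketch of the equivalence, and it is a genuine one, not a presentational shortcut. First, you propose to uncross a flow-carrying path $P$ with a \emph{rerouting} path $R \in \Rpaths{\bar{a}}$. The paper never modifies reroutings in place: the reroutings are existentially quantified and are simply recomputed after each change to the nominal flow; only flow-carrying paths are uncrossed \emph{with each other}. Trying to repair one rerouting by a tail swap is exactly what produces the cascading interference across failing arcs that you name as the obstacle. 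Second, your termination argument is an expectation, not a proof: you posit a potential (total violation of available capacities) and "expect" each swap to decrease it, but give no reason a swap that fixes non-strictness for $\bar{a}$ cannot increase the violation for some $\bar{a}' \neq \bar{a}$. Since you explicitly defer the choice of swap order or global redistribution to "the technical heart" of the argument, the proposal does not close the proof.

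For comparison, the paper's mechanism supplies precisely the three ingredients your sketch lacks. (a) A rerouting-free characterization: for $u \equiv 1$, $x$ is strictly reroutable iff $\sum_{a \in C}(1 - x(a)) \geq 1$ for every $t$-separating cut $C$ (\cref{lem:spanning-tree-characterization}), and any violating cut is \emph{$x$-bad}, meaning every arc of $C$ is followed on some flow-carrying path by another arc of $C$ (\cref{lem:good-cuts}). (b) A canonical place to uncross: bad cuts are closed under union of their defining vertex sets (\cref{lem:rightmost-cut}), so there is a rightmost bad cut $C^* = \outarcs{S^*}$; an auxiliary digraph on the arcs of $C^*$ has minimum out-degree one by badness, hence contains a cycle, and the flow paths are uncrossed along that cycle. (c) A monotone potential: the uncrossing preserves $\val{x}$, never increases any $x(a)$, and strictly decreases $\sum_{a} x(a)$; reroutability is preserved because cuts with $S \setminus S^* \neq \emptyset$ are not bad and keep their slack, while for cuts inside $S^*$ one verifies $\rcap[\bar{a}]{x'}{a} \geq \rcap[\bar{a}]{x}{a}$ arc by arc using that $a_i$ is the \emph{last} arc of $P_i$ crossing $C^*$. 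Choosing $x$ to minimize $\sum_a x(a)$ at the outset then forces the absence of bad cuts. Without (a)--(c), or a substitute for them, your uncrossing has no guarantee of progress, so the structural claim --- and with it the theorem --- remains unproved in your write-up.
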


An LP formulation for {\maxreroute} can be obtained by replacing the term $\sum_{P \in \paths[a]} x(P)$ by $\sum_{P \in \pathsVia{\bar{a}}{a}} x(P)$ in the capacity constraints of $\LPstrict$.
Unfortunately, this modification prevents the dual separation approach from working. In fact, it turns out that {\maxreroute} is hard as soon as two different capacities occur. The proof of this result is discussed in \cref{sec:capacity-hardness}.

\begin{reptheorem}{thmGeneralCapacities}\label{thm:general-capacities-hardness}
  {\maxreroute} is \NP-hard, even when $u(a) \in \{1, 2\}$ for all $a \in A$.
\end{reptheorem}

\subsection{Reroutable flows vs.\ strictly reroutable flows}\label{sec:strict-gap}

As {\maxstrictreroute} is a tightening of {\maxreroute}, the optimal value of the former is at most that of the latter. We show that the gap between the two values cannot be larger than $2$. As we can compute maximum strictly reroutable flows, we obatain a $2$-approximation for {\maxreroute}.

\begin{lemma}\label{lem:2-bound}
Let $x$ be an $s$-$t$-flow. If $x$ is strictly reroutable, then $x$ is reroutable. If $x$ is reroutable, then $\frac{1}{2}x$ is strictly reroutable.
\end{lemma}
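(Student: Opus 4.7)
The plan is to treat the two implications separately, both by unfolding the definitions of the two available-capacity expressions and exploiting the trivial inclusion $\pathsVia{\bar{a}}{a} \subseteq \paths[a]$. The fundamental observation underlying everything is that for all $a, \bar{a} \in A$ with $a \neq \bar{a}$,
\[
\rcap[\bar{a}]{x}{a} \;=\; u(a) - \sum_{P \in \paths[a] \setminus \pathsVia{\bar{a}}{a}} x(P) \;\geq\; u(a) - \sum_{P \in \paths[a]} x(P) \;=\; \rcap{x}{a},
\]
because the first sum leaves out the nonnegative terms indexed by $\pathsVia{\bar{a}}{a}$. In other words, the capacity available to a non-strict rerouting always dominates the capacity available to a strict one.

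For the first claim, I would observe that any strict rerouting $x_{\bar{a}}$ of $x$ satisfies $x_{\bar{a}}(a) \leq \rcap{x}{a} \leq \rcap[\bar{a}]{x}{a}$ for every $a \neq \bar{a}$, hence it is already a rerouting in the non-strict sense. Nothing else has to be checked.

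For the second claim, I propose, for each failing arc $\bar{a}$, to take any rerouting $x_{\bar{a}}$ of $x$ (guaranteed by hypothesis) and to use $\tfrac{1}{2}x_{\bar{a}}$ as the strict rerouting of $\tfrac{1}{2}x$. By construction $\tfrac{1}{2}x_{\bar{a}}$ is a $\tail{\bar{a}}$-$t$-flow in $\NetRemoveArc{\bar{a}}$ of value $\tfrac{1}{2}x(\bar{a})$, so the only thing to verify is the strict capacity bound $\tfrac{1}{2}x_{\bar{a}}(a) \leq u(a) - \tfrac{1}{2}x(a)$, equivalently $x_{\bar{a}}(a) + x(a) \leq 2u(a)$. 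Adding $x(a) = \sum_{P \in \paths[a]} x(P)$ to the rerouting inequality $x_{\bar{a}}(a) \leq u(a) - \sum_{P \in \paths[a] \setminus \pathsVia{\bar{a}}{a}} x(P)$ telescopes the two path sums to
\[
x_{\bar{a}}(a) + x(a) \;\leq\; u(a) + \sum_{P \in \pathsVia{\bar{a}}{a}} x(P) \;\leq\; u(a) + x(a) \;\leq\; 2u(a),
\]
which is the desired bound.

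There is no genuine obstacle in this proof; the only conceptual point worth isolating is why $x_{\bar{a}}$ itself does not already serve as a strict rerouting for $x$. The gap between $\rcap[\bar{a}]{x}{a}$ and $\rcap{x}{a}$ equals $\sum_{P \in \pathsVia{\bar{a}}{a}} x(P)$, which in the worst case is as large as $x(a)$ itself; scaling $x$ (and with it $x_{\bar{a}}$) by $\tfrac{1}{2}$ is exactly the operation that absorbs this extra slack into the factor~$2$.
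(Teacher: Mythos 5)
Your proof is correct and follows essentially the same route as the paper: the first implication via the pointwise inequality $\rcap[\bar{a}]{x}{a} \geq \rcap{x}{a}$, and the second by scaling the given rerouting by $\tfrac12$ and absorbing the gap $\sum_{P \in \pathsVia{\bar{a}}{a}} x(P) \leq x(a) \leq u(a)$ into the factor $2$. The paper's computation is the same estimate written multiplicatively rather than by adding $x(a)$ to both sides.
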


\begin{proof}
  The first statement follows from the fact that $\rcap[\bar{a}]{x}{a} \geq \rcap{x}{a}$ for all $\bar{a}, a \in A \setminus \{\bar{a}\}$. For the proof of the second statement, assume $x$ is reroutable and consider any $\bar{a} = (\bar{v}, \bar{w}) \in A$. Let $\bar{x}_{\bar{a}}$ be the rerouting of $x$ in case of failure of $\bar{a}$. Now observe that $\frac{1}{2}\bar{x}_{\bar{a}}$ is a strict rerouting of $\frac{1}{2}x$ in case of failure of $\bar{a}$, because it is a $\bar{v}$-$t$-flow of value $\frac{1}{2}x(\bar{a})$ and
  \begin{align*}
    \frac{1}{2}\bar{x}_{\bar{a}}(a) \ \leq \  
    \frac{1}{2}\left(u(a) - \elsum{P \in \paths[a] \setminus \pathsVia{\bar{a}}{a}} x(P)\right) \ \leq \ u(a) - \frac{1}{2}\elsum{P \in \paths[a]} x(P) \ = \ \bar{u}_{x/2}(a),
   \end{align*}
   where the second inequality follows from $\sum_{P \in \pathsVia{\bar{a}}{a}} x(P) \leq u(a)$.
\end{proof} 

\begin{corollary}
  There is a $2$-approximation algorithm for {\maxreroute}.
\end{corollary}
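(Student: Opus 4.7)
The plan is to combine the two preceding results into a single one-shot algorithm. First I would invoke \cref{thm:strict-poly} to compute, in polynomial time, an optimal strictly reroutable flow $x^*$, i.e., a flow attaining the maximum value of {\maxstrictreroute}. Since \cref{lem:2-bound} tells us that every strictly reroutable flow is in particular reroutable, $x^*$ is a feasible solution to {\maxreroute}, and we can simply output it.

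The remaining step is to control the approximation ratio. Let $x^{\mathrm{opt}}$ denote an optimal reroutable flow. By the second part of \cref{lem:2-bound}, $\tfrac{1}{2} x^{\mathrm{opt}}$ is strictly reroutable, and since $\val{\cdot}$ is linear in the path flow values, $\val{\tfrac{1}{2} x^{\mathrm{opt}}} = \tfrac{1}{2}\val{x^{\mathrm{opt}}}$. Optimality of $x^*$ for {\maxstrictreroute} then gives $\val{x^*} \geq \val{\tfrac{1}{2} x^{\mathrm{opt}}} = \tfrac{1}{2}\val{x^{\mathrm{opt}}}$, so the returned flow is within a factor of $2$ of the best possible reroutable flow.

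There is essentially no obstacle: all the content sits in \cref{lem:2-bound} (the two directions of the reroutable/strictly reroutable comparison) and in \cref{thm:strict-poly} (polynomial-time solvability of {\maxstrictreroute}). The only thing worth stating carefully in the written proof is that both implications of \cref{lem:2-bound} are used, one to guarantee feasibility of the output and the other to bound the optimum of {\maxreroute} in terms of the optimum of {\maxstrictreroute}.
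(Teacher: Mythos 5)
Your proof is correct and follows exactly the argument the paper intends: compute an optimal strictly reroutable flow via \cref{thm:strict-poly}, note it is feasible for {\maxreroute} by the first half of \cref{lem:2-bound}, and bound the ratio using the second half applied to an optimal reroutable flow. The paper leaves this corollary unproved precisely because it is this immediate combination, so there is nothing to add.
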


\subsection{Max flow/min cut gap for reroutable flows}\label{sec:min-cut}
An $s$-$t$-\emph{cut} is a set of arcs that intersects every $s$-$t$-path.
Its \emph{capacity} is the sum of capacities of its arcs.
A fundamental result in network flow theory is that the value of a maximum $s$-$t$-flow is equal to the capacity of a minimum $s$-$t$-cut.
This result has been successfully generalized to many variants of network flows, such as abstract flows~\cite{hoffman1974generalization} or flows over time~\cite{ford1962flows}. 
However, in other cases, such as multicommodity flows, the equality does not hold and instead, researchers investigate the worst case ratio between maximum flow and minimum cut; see, e.g., \cite{leighton1999multicommodity}.

We present a counterpart to an $s$-$t$-cut for reroutable flows.
It turns out that max flow and min cut are not necessarily equal and we give a tight bound on the corresponding max flow/min cut gap. 
An \emph{\rcut} is a set of arcs $R \subseteq A$ together with a collection of cuts $(C_a)_{a \in R}$, where each $C_a$ is a $\tail{a}$-$t$-cut containing $a$.
We denote $(R, (C_a)_{a \in R})$ by $(R, C)$ for short.
The capacity of the {\rcut} $(R, C)$ is 
$$\capacity{R, C} := \phi(R, C) + \sum_{a \in R} u(C_a \setminus \{a\}),$$
where $\phi(R, C)$ is the capacity of a minimum $s$-$t$-cut in $\NetRemoveSet{\cup_{a \in R} C_a}$.

The intuition behind this definition is the following: For every $a \in R$, all flow that crossed the cut $C_a$ must cross the $C_a \setminus \{a\}$ if $a$ fails. If a flow path does not cross any cut in $C_a$, then it crosses the minimum $s$-$t$-cut in $\NetRemoveSet{\cup_{a \in R} C_a}$. Therefore the capacity of an $R$-cut is an upper bound on the value of any reroutable flow. 

\begin{lemma}
  $\val{x} \leq \capacity{R, C}$ for any reroutable flow $x$ and any {\rcut} $(R,C)$.
\end{lemma}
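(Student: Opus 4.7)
The plan is to split the flow $x$ into the part supported on paths that avoid the cut arcs and the part supported on paths that use at least one cut arc, and bound each part separately.

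Let $\mathcal{Q} := \{P \in \paths \suchthat P \cap \bigcup_{a \in R} C_a = \emptyset\}$. Each $P \in \mathcal{Q}$ is an $s$-$t$-path in $\NetRemoveSet{\bigcup_{a \in R} C_a}$, so the restriction of $x$ to $\mathcal{Q}$ is a feasible $s$-$t$-flow in that network and therefore $\sum_{P \in \mathcal{Q}} x(P) \leq \phi(R,C)$. The remaining task is to prove that $\sum_{P \in \paths \setminus \mathcal{Q}} x(P) \leq \sum_{a \in R} u(C_a \setminus \{a\})$, at which point summing the two bounds yields the claim.

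For a fixed $a \in R$, let $x_{a}$ be a rerouting for the failure of $a$, as guaranteed by reroutability. Because $C_a \setminus \{a\}$ is a $\tail{a}$-$t$-cut in $\NetRemoveArc{a}$ and $x_{a}$ is a $\tail{a}$-$t$-flow of value $x(a)$, we have $x(a) \leq \sum_{b \in C_a \setminus \{a\}} x_{a}(b)$. Combining this with the capacity constraint $x_{a}(b) \leq \rcap[a]{x}{b} = u(b) - \sum_{P \in \paths[b] \setminus \pathsVia{a}{b}} x(P)$ and summing over $b \in C_a \setminus \{a\}$ gives
\begin{equation*}
x(a) + \sum_{b \in C_a \setminus \{a\}} \; \sum_{P \in \paths[b] \setminus \pathsVia{a}{b}} x(P) \;\leq\; u(C_a \setminus \{a\}).
\end{equation*}
Rewriting by path rather than by arc, and using that $a \in C_a$, the left-hand side equals $\sum_{P \in \paths} x(P) \cdot m_a(P)$, where $m_a(P) = 1$ if $a \in P$ and $m_a(P) = |P \cap C_a|$ if $a \notin P$.

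Summing the displayed inequality over $a \in R$ yields $\sum_{P \in \paths} x(P) \cdot M(P) \leq \sum_{a \in R} u(C_a \setminus \{a\})$, where $M(P) = \sum_{a \in R} m_a(P)$. The main step is now to verify $M(P) \geq 1$ for every $P \in \paths \setminus \mathcal{Q}$: by definition there exists $a^{*} \in R$ with $P \cap C_{a^{*}} \neq \emptyset$; if $a^{*} \in P$ then $m_{a^{*}}(P) = 1$, and otherwise $m_{a^{*}}(P) = |P \cap C_{a^{*}}| \geq 1$. This gives $\sum_{P \in \paths \setminus \mathcal{Q}} x(P) \leq \sum_{a \in R} u(C_a \setminus \{a\})$ and completes the proof.

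The only real subtlety is the bookkeeping in the final step, where one must be careful that a path traversing several $C_a$'s or containing several arcs of $R$ is counted with multiplicity at least $1$; the case split on whether $a^{*} \in P$ handles this cleanly. The role of the rerouting is entirely encoded in the per-$a$ inequality $x(a) \leq \sum_{b \in C_a \setminus \{a\}} \rcap[a]{x}{b}$, so no further use of the reroutable-flow definition is needed.
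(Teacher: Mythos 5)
Your proof is correct and follows essentially the same route as the paper's: both derive $x(a) \le \sum_{b \in C_a \setminus \{a\}} \rcap[a]{x}{b}$ from the existence of a rerouting across the $\tail{a}$-$t$-cut $C_a \setminus \{a\}$, and then charge every flow path either to some $C_a$ it crosses or to the minimum $s$-$t$-cut realizing $\phi(R,C)$. One cosmetic point: when $a \in P$ the multiplicity of $x(P)$ on your left-hand side is $1 + |\{b \in C_a \cap P \suchthat b \prec_P a\}|$ rather than exactly $1$, but since this only makes the true coefficient larger and $x \geq 0$, your stated inequality still follows.
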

\begin{proof}
  Let $a \in R$. As $C_{a}$ is a $\tail{a}$-$t$-cut, $x(a)$ units of flow have to be rerouted across the arcs in $C_{a} \setminus \{a\}$ when $a$ fails. Therefore
  $$x(a) \; \leq \; \elsum{a' \in C_{a} \setminus \{a\}} \rcap[a]{x}{a'} \; \leq \; u(C_{a} \setminus \{a\}) - \elsum{P \in \paths : P \cap C_{a} \neq \emptyset,\,a \notin P} x(P).$$
  This implies
  $\sum_{P \in \paths : P \cap C_{a} \neq \emptyset} x(P) \leq u(C_{a} \setminus \{a\})$ for all $a \in R$. Now let $S$ be a minimum $s$-$t$-cut in $\NetRemoveSet{\cup_{a \in R} C_a}$. Then 
  \begin{align*}
    \val{x} \; & = \; \elsum{P \in \paths} x(P) \; \leq \; \elsum{P \in \paths : P \cap S \neq \emptyset} x(P) + \elsum{a \in R} \elsum{\qquad P \in \paths : P \cap C_{a} \neq \emptyset} x(P) \\
    & \leq \; \sum_{a \in S} u(a) + \sum_{a \in R} u(C_a \setminus \{a\}) \; = \; \capacity{R, C}. \hspace{2.05cm}\qedhere
  \end{align*}  
\end{proof}

At the end of this section, we further show that {\rcut}s correspond to integral solutions to $\LPstrictD$. We now give a constructive proof bounding the duality gap between maximum strictly reroutable flow and minimum {\rcut} (or, equivalently, the integrality gap of the dual~LP). In \cref{sec:tightness} we give an example showing that the bound is tight.

\begin{theorem}\label{thm:cut-gap}
Let $x$ be a strictly reroutable flow of maximum value and let $(R, C)$ be an {\rcut} of minimum capacity. Then $\val{x} \geq \frac{1}{2} \capacity{R, C}$.
\end{theorem}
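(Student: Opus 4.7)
The plan is to prove the bound via LP duality and rounding. By \cref{thm:strict-poly} and strong LP duality, $V := \val{x}$ equals the optimal value of $\LPstrictD$. The key structural observation is that every $R$-cut $(R, (C_a)_{a \in R})$ induces an integral feasible solution of $\LPstrictD$ whose objective is exactly $\capacity{R, C}$: letting $S$ be an $s$-$t$-cut of $\NetRemoveSet{\cup_{\bar{a} \in R} C_{\bar{a}}}$ realizing $\phi(R, C)$ (so $S \cap R = \emptyset$), set $z(\bar{a}) := \mathbb{1}[\bar{a} \in R]$, $y_{\bar{a}}(a) := \mathbb{1}[\bar{a} \in R \wedge a \in C_{\bar{a}} \setminus \{\bar{a}\}]$ for $\bar{a} \in R$, and $y_a(a) := 1$ for $a \in S$, with all remaining $y$ entries zero; a direct check of both dual constraints confirms feasibility and yields objective $u(S) + \sum_{\bar{a} \in R} u(C_{\bar{a}} \setminus \{\bar{a}\}) = \capacity{R, C}$. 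Hence the theorem reduces to bounding the integrality gap of $\LPstrictD$ by~$2$.

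Take an optimal fractional dual solution $(y^*, z^*)$ of value $V$. I may assume $z^* \geq 0$: raising any negative $z^*(\bar{a})$ to $0$ relaxes the second constraint (trivially) and only \emph{increases} the LHS of the first constraint for paths through $\bar{a}$ (where $z$ has coefficient $+1$), while keeping the objective unchanged. Now set $R := \{\bar{a} \in A : z^*(\bar{a}) \geq 1/2\}$. For each $\bar{a} \in R$, the second dual constraint says that $y^*_{\bar{a}}$ is a fractional $\tail{\bar{a}}$-$t$-cut in $\NetRemoveArc{\bar{a}}$ of size $\geq z^*(\bar{a}) \geq 1/2$, so by integrality of the min-cut LP there is an integer $\tail{\bar{a}}$-$t$-cut $C'_{\bar{a}}$ in $\NetRemoveArc{\bar{a}}$ of capacity at most $2 \sum_a u(a) y^*_{\bar{a}}(a)$. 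Set $C_{\bar{a}} := C'_{\bar{a}} \cup \{\bar{a}\}$, a $\tail{\bar{a}}$-$t$-cut containing $\bar{a}$, yielding
\[\sum_{\bar{a} \in R} u(C_{\bar{a}} \setminus \{\bar{a}\}) \ \leq \ 2 \sum_{\bar{a} \in R} \sum_a u(a) y^*_{\bar{a}}(a).\]

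The main obstacle is to bound $\phi(R, C)$ by the remaining dual mass $2\sum_{\bar{a} \notin R} \sum_a u(a) y^*_{\bar{a}}(a)$. Every $s$-$t$-path $P$ in $G' := \NetRemoveSet{\cup_{\bar{a} \in R} C_{\bar{a}}}$ avoids $R$, so $z^*(a) < 1/2$ on $a \in P$, while the first dual constraint still gives $\sum_{a \in P}(z^*(a) + \sum_{\bar{a}} y^*_{\bar{a}}(a)) \geq 1$. Doubling produces a fractional $s$-$t$-cut of $G'$ of size $\geq 2$, and min-cut LP duality gives an integer $s$-$t$-cut $S$ of $G'$ with $u(S) \leq \sum_{a \in G'} u(a) z^*(a) + \sum_{a \in G'} u(a) \sum_{\bar{a}} y^*_{\bar{a}}(a)$. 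The spurious term $\sum_{a \in G'} u(a) z^*(a)$ is the sticking point; since $z^*$ does not appear in the dual objective, it is not directly controlled by $V$. The intended way to close the gap is a charging argument: for each residual arc $a$ with $z^*(a) > 0$, the second dual constraint applied to $\bar{a} = a$ certifies $y^*_a$ as a fractional $\tail{a}$-$t$-cut of size $\geq z^*(a)$ in $\NetRemoveArc{a}$, so the excess $u(a) z^*(a)$ can be charged to $\sum_{a'} u(a') y^*_a(a')$, which is dual mass indexed by $\bar{a} = a \notin R$ and otherwise unused by the rerouting cuts $C_{\bar{a}}$. With a careful bookkeeping ensuring each dual unit is charged $O(1)$ times, one obtains $\phi(R, C) \leq 2 \sum_{\bar{a} \notin R} \sum_a u(a) y^*_{\bar{a}}(a)$; adding this to the preceding inequality gives $\capacity{R, C} \leq 2V$, completing the proof.
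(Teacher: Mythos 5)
Your reduction of the theorem to the integrality gap of $\LPstrictD$ is sound, and the construction of an integral dual solution from an {\rcut} is correct (it mirrors an observation the paper itself makes). The choice $R := \{\bar{a} : z^*(\bar{a}) \geq 1/2\}$ and the rounding of $2y^*_{\bar{a}}$ to the cuts $C_{\bar{a}}$ for $\bar{a} \in R$ are also fine. But the proof is not complete: the bound on $\phi(R,C)$, which you yourself flag as the sticking point, does not go through as sketched. The charge you propose for the term $\sum_a u(a)z^*(a)$ fails quantitatively. The second dual constraint for $\bar{a} = a$ only certifies that $y^*_a/z^*(a)$ is a fractional $\tail{a}$-$t$-cut in $\NetRemoveArc{a}$, which by min-cut LP duality yields $\sum_{a'} u(a')y^*_a(a') \geq z^*(a)\cdot\lambda(a)$, where $\lambda(a)$ is the minimum $\tail{a}$-$t$-cut capacity in $\NetRemoveArc{a}$ --- \emph{not} $\geq z^*(a)\cdot u(a)$. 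Since $\lambda(a)$ can be arbitrarily smaller than $u(a)$ (a high-capacity arc whose tail has only a tiny escape route), the charge $u(a)z^*(a)$ against the dual mass $\sum_{a'}u(a')y^*_a(a')$ can lose an unbounded factor; and an optimal dual will exploit exactly such arcs, since covering them via $z$ is cheap. There is also double counting: your bound on $u(S)$ spends the mass $\sum_{a\in G'}u(a)\sum_{\bar{a}\in R}y^*_{\bar{a}}(a)$, which you have already charged (doubled) to pay for the cuts $C_{\bar{a}}$, so even if each charge worked the constant would exceed $2$ without further bookkeeping.

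The paper sidesteps all of this by working on the primal side with \emph{capped} capacities: for each $a$ it takes a minimum $\tail{a}$-$t$-cut $C_a$ containing $a$ and sets $u'(a) := \min\{u(a), u(C_a\setminus\{a\})\}$ --- precisely the quantity $\min\{u(a),\lambda(a)\}$ that your charging argument is missing. A single ordinary max-flow/min-cut computation with respect to $u'$ then yields both the {\rcut} (the arcs of the min cut $C'$ where the cap is active form $R$, giving $\capacity{R,C}\leq \val{x'}$) and, after halving the max flow $x'$, a strictly reroutable flow; the strict reroutability of $x'/2$ is verified by a short cut argument using $x'(a)\leq u'(a)$. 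If you want to rescue your dual-rounding route, you would need to build this capping into the argument, e.g., by first arguing that an optimal dual can be normalized so that $u(a)z^*(a)\leq \sum_{a'}u(a')y^*_a(a')$ holds arc by arc; as written, the proof has a genuine hole at its central step.
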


\begin{proof}
For $a \in A$, let $C_a$ be minimum $\tail{a}$-$t$-cut in $D$ containing $a$ and define $u'(a) := \min \{u(a), u(C_a \setminus\{a\})\}$. Let $C'$ be a minimum $s$-$t$-cut in $D$ with respect to the capacities $u'$ and let $x'$ be a corresponding maximum flow. Now define $R := \{a \in C' \suchthat u'(a) < u(a)\}$. Observe that $R$ and $(C_a)_{a \in R}$ define an {\rcut} and that $\phi(R, C) \leq u(C' \setminus R)$. We obtain $$\capacity{R, C} \leq \elsum{a \in C' \setminus R} u(a) + \sum_{a \in R} u(C_a \setminus \{a\}) = \sum_{a \in C'} u'(a) = \val{x'}.$$  Now let $x := x'/2$. It is sufficient to show that $x$ is a strictly reroutable flow.
By contradiction assume that there is $\bar{a} \in A$ for which there is no strict rerouting of $x$. By the max flow/min cut theorem, there must be a $\tail{\bar{a}}$-$t$-cut $\bar{C}$ in $\NetRemoveArc{\bar{a}}$ with $\sum_{a \in \bar{C}} \rcap{x}{a} < x(\bar{a})$. Note that $x(a) \leq u'(a)/2 \leq u(a)/2$ for every $a \in A$ by construction of $x$. Thus
$$\frac{1}{2} \sum_{a \in \bar{C}} u(a) \leq \sum_{a \in \bar{C}} (u(a) - x(a)) < x(\bar{a}) \leq \frac{1}{2}u'(\bar{a}) \leq \frac{1}{2}u(C_{\bar{a}} \setminus \{\bar{a}\}).$$
However, this implies that $\bar{C} \cup \{\bar{a}\}$ is a smaller $\tail{\bar{a}}$-$t$-cut than $C_{\bar{a}}$, a contradiction.
\end{proof}

\paragraph{Computing a minimum capacity {\rcut}}
Let us denote the problem of finding an {\rcut} of minimum capacity by {\minrcut}. The proof of \cref{thm:cut-gap} describes how to compute a $2$-approximate solution to this problem.

\begin{corollary}
  \mbox{There is a $2$-approximation algorithm for {\minrcut}.}
\end{corollary}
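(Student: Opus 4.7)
The plan is to observe that the proof of Theorem~\ref{thm:cut-gap} is already constructive: it explicitly builds an R-cut $(R, (C_a)_{a \in R})$ and, in the same argument, certifies that its capacity is at most $\val{x'}$ for a particular max flow $x'$ such that $x'/2$ is strictly reroutable. So the corollary only requires checking that this construction can be carried out in polynomial time and translating the inequalities of that proof into an approximation guarantee against a minimum R-cut.

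First I would carry out the construction of Theorem~\ref{thm:cut-gap} algorithmically. For each $a \in A$, I need a minimum $\tail{a}$-$t$-cut $C_a$ that is forced to contain $a$; this reduces to a single standard min-cut computation, for instance by adding an arc from $t$ to $\head{a}$ of infinite capacity (thereby forcing $\head{a}$ and $t$ to the same side of any finite cut) and computing a minimum $\tail{a}$-$t$-cut in the resulting network. This yields $u'(a) = \min\{u(a), u(C_a \setminus \{a\})\}$ for every $a$ in polynomial time. A single further min $s$-$t$-cut computation in $D$ with respect to capacities $u'$ produces $C'$, from which $R := \{a \in C' \suchthat u'(a) < u(a)\}$ and the final R-cut $(R, (C_a)_{a \in R})$ are read off. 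In total the algorithm performs $|A|+1$ min-cut computations and is therefore polynomial.

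For the approximation guarantee, let $(R^*, C^*)$ be a minimum R-cut and $x_{\max}$ a maximum strictly reroutable flow. The proof of Theorem~\ref{thm:cut-gap} already shows $\capacity{R, C} \leq \val{x'}$ and that $x'/2$ is strictly reroutable, so $\val{x_{\max}} \geq \val{x'}/2$. By Lemma~\ref{lem:2-bound}, every strictly reroutable flow is reroutable, and hence by the upper-bound lemma preceding Theorem~\ref{thm:cut-gap} we have $\val{x_{\max}} \leq \capacity{R^*, C^*}$. Chaining these,
\begin{equation*}
\capacity{R, C} \;\leq\; \val{x'} \;\leq\; 2\val{x_{\max}} \;\leq\; 2\capacity{R^*, C^*}.
\end{equation*}

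The main obstacle is essentially a bookkeeping check: confirming that ``minimum $\tail{a}$-$t$-cut containing $a$'' reduces cleanly to a standard minimum cut on an auxiliary network, so that the whole algorithm stays polynomial. Once that small reduction is in hand, the approximation ratio is an immediate consequence of inequalities already established in the proof of Theorem~\ref{thm:cut-gap}, with no new combinatorial insight required.
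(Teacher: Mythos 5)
Your proposal is essentially the paper's own proof: the paper likewise observes that the construction in the proof of Theorem~\ref{thm:cut-gap} is algorithmic (one min-cut computation per arc to get $u'(a)$ and $C_a$, plus one more for $C'$), and the ratio follows by chaining the inequalities exactly as you do. The only flaw is in your auxiliary construction for a minimum $\tail{a}$-$t$-cut \emph{containing} $a$: adding an infinite-capacity arc \emph{from} $t$ \emph{to} $\head{a}$ forces nothing, since such an arc lies on no $\tail{a}$-$t$-path and never crosses a directed cut $\outarcs{S}$ with $t \notin S$; you need the arc in the opposite direction, from $\head{a}$ to $t$, so that any finite cut puts $\head{a}$ on the sink side and hence contains $a$. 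With that one-line fix the argument matches the paper's.
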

\begin{proof}
  For every $a \in A$ the capacity $u'(a)$ and the corresponding cut $C_a$ can be computed by a standard minimum $s$-$t$-cut computation. Given the values of $u'$, also $C'$ can be computed by a min cut computation.
\end{proof}

\subsubsection{{\rcut}s and integral dual solutions}
As mentioned above, {\rcut}s correspond to integral solutions of $\LPstrictD$. We give a formal argument of this correspondence.

Given an {\rcut} $(R, C)$, let $C^*$ be a minimum $s$-$t$-cut in $\NetRemoveSet{\bigcup_{a \in R}C_{a}}$.
We set $z(\bar{a}) := 1$ for every $\bar{a} \in R$, $y_{\bar{a}}(a) := 1$ for every $a \in C_{\bar{a}}$, and $y_{a}(a) := 1$ for every $a \in C^*$. All other variables are set to $0$. It is easy to check that $y, z$ corresponds to a feasible solution to $\LPstrictD$ with objective value $\capacity{R, C}$.

Conversely, consider an integral dual solution $y, z$. Note that we can assume that all variables take values in $\{0, 1\}$:
If $z(a) < 0$ we can set it to $0$ without losing feasibility. If any variable is takes a value larger than $1$, we can reduce it to $1$ without losing feasibility. 
Let $R := \{\bar{a} \in A \suchthat z(\bar{a}) = 1\}$.
By the second set of constraints, for every $\bar{a} \in R$, every $\tail{\bar{a}}$-$t$ path must be covered, i.e., there must be a $\tail{\bar{a}}$-$t$-cut $C_{\bar{a}}$ with $y_{\bar{a}}(a) = 1$ for all $a \in C_{\bar{a}} \setminus \{\bar{a}\}$.
By the first set of constraints, for every $P \in \paths$ there is an $a \in P$ such that either $z(a) = 1$ or $y_{\bar{a}}(a) = 1$ for some $\bar{a} \in A$.
Thus the support of $y$ contains an $s$-$t$-cut in $\NetRemoveSet{\bigcup_{a \in R}C_{a}}$. We conclude that $\capacity{R, C}$ is at most the objective value of the solution $y,z$.

\subsection{Summary of the bounds and tightness}\label{sec:tightness}
Putting the bounds from Lemma~\ref{lem:2-bound} and \cref{thm:cut-gap} together, we obtain the following corollary.

\begin{corollary}\label{cor:bounds}
  Let $(R, C)$ be a minimum capacity {\rcut} and let $x_{\textup{RF}}$ and $x_{\textup{SRF}}$ be maximal reroutable and strictly reroutable flows, respectively. Then\\[-10pt]
  $$\val{x_{\textup{RF}}} \leq \capacity{R, C} \leq 2\val{x_{\textup{SRF}}} \leq 2\val{x_{\textup{RF}}}.$$
\end{corollary}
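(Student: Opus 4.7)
The plan is to assemble the corollary directly from three results established earlier in the section, reading the chain of inequalities from left to right and invoking one previous statement for each step. No new construction is required.

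For the leftmost inequality $\val{x_{\textup{RF}}} \leq \capacity{R, C}$, I would simply apply the lemma stated just before Theorem~\ref{thm:cut-gap}, which asserts $\val{x} \leq \capacity{R,C}$ for every reroutable flow $x$ and every {\rcut} $(R,C)$. Taking $x = x_{\textup{RF}}$ and the given minimum capacity {\rcut} $(R,C)$ yields the bound.

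For the middle inequality $\capacity{R, C} \leq 2\val{x_{\textup{SRF}}}$, I would invoke Theorem~\ref{thm:cut-gap} directly. That theorem states $\val{x_{\textup{SRF}}} \geq \frac{1}{2}\capacity{R, C}$ precisely when $(R,C)$ is of minimum capacity and $x_{\textup{SRF}}$ is a maximum strictly reroutable flow; multiplying through by $2$ gives the desired form.

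For the rightmost inequality $2\val{x_{\textup{SRF}}} \leq 2\val{x_{\textup{RF}}}$, I would appeal to the first statement of Lemma~\ref{lem:2-bound}, which says that every strictly reroutable flow is reroutable. Hence $x_{\textup{SRF}}$ is a feasible reroutable flow, so $\val{x_{\textup{SRF}}} \leq \val{x_{\textup{RF}}}$ by maximality of $x_{\textup{RF}}$, and doubling gives the stated inequality. Since each step is a direct citation, there is no genuine obstacle; the only thing to double-check is that Theorem~\ref{thm:cut-gap} is applied with the correct pairing of minimum-capacity {\rcut} and maximum strictly reroutable flow, so the chain closes without any slack beyond the factor of~$2$.
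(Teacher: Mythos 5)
Your proposal is correct and matches the paper's (implicitly sketched) argument exactly: the left inequality is the earlier lemma bounding any reroutable flow by any {\rcut}, the middle is Theorem~\ref{thm:cut-gap} applied to the minimum {\rcut} and maximum strictly reroutable flow, and the right follows from the first part of Lemma~\ref{lem:2-bound} together with maximality of $x_{\textup{RF}}$. Nothing further is needed.
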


\medskip

The following gadget will be useful throughout the paper in order to construct examples and reductions.

\paragraph{Backup links} A \emph{backup link} from $v$ to $w$ is a $v$-$w$-path $(a', a'')$ of length $2$ in which the intermediate node is incident only to the two arcs of the path and $u(a') := u(a'') := \max_{a \in A} u(a)$.
Note that $x(a') = x(a'') = 0$ for any reroutable flow, because when $a''$ fails, there is no $\tail{a''}$-$t$-path for rerouting the flow on that arc. A \emph{bidirected} backup link between $v$ and $w$ consists of two distinct backup links, one from $v$ to $w$ and one from $w$ to $v$.

\paragraph{Tightness of bounds}
The network depicted in \cref{fig:half-flow} shows that the bounds given in Lemma~\ref{lem:2-bound} and \cref{thm:cut-gap} are tight. Note that any reroutable flow is completely determined by the value $x(P)$ it sends along the path $P := \{a_1, a_2, a_3\}$, as backup links can only be used for rerouting.
  \begin{enumerate}
  \item For the bound on gap between max reroutable flow and max strictly reroutable flow, we set the capacities $u(a_1) = 2$ and $u(a) = 1$  for all $a \in A \setminus \{a_1\}$. We show that $x(P) = 1$ defines a reroutable flow. Failure of $a_2$ is not a problem, as $\tail{a_2}$ has a backup link to $t$. If $a_3$ fails, flow can use the backup link to $s$ and traverse~$a_1$, as $\rcap[a_3]{x}{a_1} = 2 - x(a_1) =1$, to reach the backup link from $\tail{a_2}$ to $t$. If $a_1$ fails, flow can be rerouted from $s$ using the backup link and the arc $a_3$, as $a_1 \prec_P a_3$ and thus $\rcap[a_1]{x}{a_3} = 1$.
  However, in a strictly reroutable flow, this last rerouting is no longer possible,  as $\rcap{x}{a_3} = 1 - x(a_3)$. Hence, $1 \geq x(a_1) + x(a_3) = 2x(P)$ for any strictly reroutable flow $x$. The maximum strictly reroutable flow value therefore is $1/2$.
  \item For the bound on the flow/cut gap, set all capacities to $1$.
  Then both the maximum reroutable flow value and the maximum strictly reroutable flow value are $1/2$. Consider any {\rcut} $(R, C)$. As capacities are integral, $\capacity{R,C} \geq 1$. An {\rcut} with $\capacity{R,C} = 1$ is, e.g., $R := \{a_3\}$ and $C_{a_3} := \{a_1, a_3\}$.
  \end{enumerate}
Note that \cref{fig:half-flow} also shows that optimal solutions to both {\maxreroute} and {\maxstrictreroute} can be fractional, even when capacities are integral.

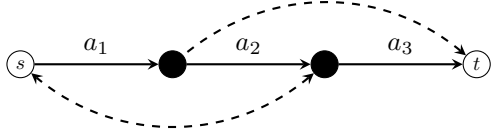
\begin{figure}[t]
\centering
\begin{tikzpicture}
  \path node[labeledNode] (a) {$s$}
  	++(2, 0) node[normalNode] (b) {}
  		edge[normalEdge, <-] node[above] {$a_1$} (a)
  	++(2, 0) node[normalNode] (c) {}
  		edge[normalEdge, <-] node[above] {$a_2$} (b)
  		edge[secondaryEdge, bend left=40, <->] (a)
  	++(2, 0) node[labeledNode] (w) {$t$}
  		edge[normalEdge, <-] node[above] {$a_3$} (c)
  		edge[secondaryEdge, bend right=40, <-] (b);
\end{tikzpicture}
\caption{Example showing that the bounds given in Lemma~\ref{lem:2-bound} and Theorem~\ref{thm:cut-gap} are tight. Dashed arcs correspond to (bidirected) backup links, which can only be used for rerouting. When all arcs have unit capacities, the maximum (strictly) reroutable flow has a value of $1/2$. When changing the capacity of $a_1$ to $2$, the maximum reroutable flow value increases to $1$, whereas the maximum strictly reroutable flow value remains $1/2$. The minimum {\rcut} capacity is $1$ in both cases.\label{fig:half-flow}}
\end{figure}

\begin{remark}
Note that the worst-case for the bounds in Corollary~\ref{cor:bounds} cannot be attained simultaneously, i.e., in any given instance either the max flow/min cut gap or the gap between reroutable and strictly reroutable flow has to be significantly smaller than $2$---in fact, at least one of them has to be within $\sqrt{2}$.
\end{remark}

\section{Unit capacity networks}\label{sec:unit-capacities}

Throughout this section, we assume $u \equiv 1$. We will show that in this case, any reroutable flow can be transformed into a strictly reroutable flow of the same value. While this closes the gap between the two reroutable flow variants, note that the flow/cut gap can still be $2$ in unit capacity networks, as can be seen in \cref{fig:half-flow}.
We start by giving an alternative characterization for strictly reroutable flows in unit capacity networks.
\vspace{-0.2cm}

\paragraph{Cuts separating $t$} For $S \subseteq V$, let $\outarcs{S} := \{a \in A \,:\, \tail{a} \in S,\, \head{a} \in V \setminus S\}$ denote the cut induced by $S$. We define $\mathcal{S} := \{S \subset V \setminus \{t\} \,:\, S \neq \emptyset\}$ and let $\mathcal{C} := \{\outarcs{S} \,:\, S \in \mathcal{S}\}$ be the set of \emph{$t$-separating cuts}.
 W.l.o.g.~we assume $\outarcs{S} \neq \emptyset$ for all $S \in \mathcal{S}$, as no vertex in a set $S$ with $\outarcs{S} = \emptyset$ can be on an $s$-$t$-path.
 
\begin{lemma}\label{lem:spanning-tree-characterization}
  Let $x$ be an $s$-$t$-flow for capacities $u \equiv 1$.
  Then $x$ is strictly reroutable if and only if $\sum_{a \in C} (1 - x(a)) \geq 1$ for all $C \in \mathcal{C}$.
\end{lemma}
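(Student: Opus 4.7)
The plan is to reduce both directions to the classical max-flow/min-cut theorem applied in the rerouting residual network. Concretely, for a fixed failing arc $\bar{a}$, a strict rerouting of $x$ exists if and only if every $\tail{\bar{a}}$-$t$-cut in $\NetRemoveArc{\bar{a}}$ has capacity at least $x(\bar{a})$ under the residual capacities $1 - x(a)$. With this in hand, the lemma becomes a comparison between $t$-separating cuts $\outarcs{S}$ in the full digraph $D$ and $\tail{\bar{a}}$-$t$-cuts $\outarcs{S'} \setminus \{\bar{a}\}$ in the residual digraph $\NetRemoveArc{\bar{a}}$.

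For the forward direction, suppose $x$ is strictly reroutable and fix $C = \outarcs{S}$ with $S \in \mathcal{S}$. If $x(a) = 0$ for every $a \in C$, then $\sum_{a \in C}(1 - x(a)) = |C| \geq 1$ and we are done. Otherwise I pick any $\bar{a} \in C$ with $x(\bar{a}) > 0$. Since $\tail{\bar{a}} \in S$ and $t \notin S$, any $\tail{\bar{a}}$-$t$-path in $\NetRemoveArc{\bar{a}}$ must leave $S$ through an arc of $C \setminus \{\bar{a}\}$, so $C \setminus \{\bar{a}\}$ is a $\tail{\bar{a}}$-$t$-cut in the residual digraph. Applying max-flow/min-cut to the strict rerouting for $\bar{a}$ then yields $\sum_{a \in C \setminus \{\bar{a}\}}(1 - x(a)) \geq x(\bar{a})$; adding $1 - x(\bar{a})$ on both sides gives $\sum_{a \in C}(1 - x(a)) \geq 1$.

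For the backward direction, fix $\bar{a}$ and let $T$ be an arbitrary $\tail{\bar{a}}$-$t$-cut in $\NetRemoveArc{\bar{a}}$; without loss of generality $T$ is minimal, so $T = \outarcs{S'} \setminus \{\bar{a}\}$ for some $S'$ with $\tail{\bar{a}} \in S'$ and $t \notin S'$. Then $S' \in \mathcal{S}$, and the hypothesis gives $\sum_{a \in \outarcs{S'}}(1 - x(a)) \geq 1$. Now I split on whether $\bar{a} \in \outarcs{S'}$: if not, then $T = \outarcs{S'}$ and its residual capacity is at least $1 \geq x(\bar{a})$, using $u \equiv 1$; if so, subtracting the contribution $1 - x(\bar{a})$ of $\bar{a}$ from the hypothesis gives $\sum_{a \in T}(1 - x(a)) \geq x(\bar{a})$. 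Either way the minimum-cut condition is satisfied, so max-flow/min-cut furnishes the desired strict rerouting.

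I expect the only mildly delicate step to be the bookkeeping around whether $\bar{a}$ itself belongs to $\outarcs{S'}$ in the backward direction: the hypothesis is phrased in terms of cuts of $D$, whereas strict reroutability needs capacity bounds on cuts of $\NetRemoveArc{\bar{a}}$. The observation that unit capacities give $x(\bar{a}) \leq 1$ is precisely what makes the case $\bar{a} \notin \outarcs{S'}$ go through without any additional assumption, and the corresponding manipulation in the forward direction (adding back the $1 - x(\bar{a})$ term) is what turns the max-flow/min-cut inequality into the clean bound $\sum_{a \in C}(1 - x(a)) \geq 1$.
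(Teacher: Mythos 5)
Your proof is correct and follows essentially the same route as the paper's: both directions reduce to the max-flow/min-cut theorem for the rerouting network $\NetRemoveArc{\bar{a}}$ with residual capacities $1-x(a)$, with the bookkeeping term $1-x(\bar{a})$ converting between cuts of $D$ and cuts of $\NetRemoveArc{\bar{a}}$. The only (harmless) differences are stylistic: the paper argues the backward direction by contradiction, and your case split on whether some arc of $C$ carries flow in the forward direction is unnecessary, since the argument for $\bar a\in C$ with $x(\bar a)>0$ works verbatim for any $\bar a\in C$.
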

\begin{proof}
  We first show sufficiency of the condition. By contradiction assume that \mbox{$\sum_{a \in C} (1 - x(a)) \geq 1$} for all $C \in \mathcal{C}$ but $x$ is not strictly reroutable. Because $x$ is not strictly reroutable, there must be an arc $\bar{a} \in A$ such that there is no rerouting of $x$ for $\bar{a}$. This means that the maximum flow value that can be sent in $\NetRemoveArc{\bar{a}}$ with capacities $\rcap{x}{}$ from $\bar{v} := \tail{\bar{a}}$ to $t$ is strictly smaller than $x(\bar{a})$. By max flow/min cut, this implies there is a $\bar{v}$-$t$-cut $C \in \mathcal{C}$ with $\sum_{a \in C \setminus \{\bar{a}\}} \rcap{x}{a} < x(\bar{a})$, which implies $\sum_{a \in C} 1 - x(a) < 1$, contradicting our initial assumption.
  
  To see necessity, assume $x$ is a strictly reroutable flow and let $C \in \mathcal{C}$ be any $t$-separating cut. Now let $\bar{a} \in C$. Since $x$ is strictly reroutable, there is an $\tail{\bar{a}}$-$t$-flow of value $x(\bar{a})$ in $\NetRemoveArc{\bar{a}}$ with capacites $\rcap{x}{}$. By max flow/min cut this implies $\sum_{a \in C \setminus \{\bar{a}\}} \rcap{x}{a} \geq x(\bar{a})$.
\end{proof}

In the following, we identify those cuts that might violate the condition given in Lemma~\ref{lem:spanning-tree-characterization} for a (non-strictly) reroutable flow. We then show that this class of cuts forms a semi-lattice. This allows us to apply an uncrossing of the flow paths that iteratively eliminates the problematic cuts while maintaining reroutability.
\vspace{-0.2cm}

\paragraph{Bad cuts}
Let $x$ be an $s$-$t$-flow and let $C \in \mathcal{C}$ be a $t$-separating cut. An arc $\bar{a} \in C$ is \emph{$(x, C)$-bad} if there is an arc $a \in C$ and a path $P \in \pathsVia{\bar{a}}{a}$ with $x(P) > 0$. A cut $C$ is \emph{$x$-bad} if all arcs $\bar{a} \in C$ are $(x, C)$-bad.

\begin{lemma}\label{lem:good-cuts}
  Let $x$ be a reroutable flow for capacities $u \equiv 1$. Let $C \in \mathcal{C}$ be a $t$-separating cut. 
  If $\sum_{a \in C} (1 - x(a)) < 1$ then $C$ is $x$-bad.
\end{lemma}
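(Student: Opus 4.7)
The plan is to argue the contrapositive: if some $\bar{a} \in C$ is not $(x,C)$-bad, then $\sum_{a \in C}(1 - x(a)) \geq 1$.

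First, I would fix an arbitrary $\bar{a} \in C$ that is not $(x,C)$-bad. Since $C = \outarcs{S}$ for some $S \in \mathcal{S}$ with $\tail{\bar{a}} \in S$ and $t \notin S$, the set $C \setminus \{\bar{a}\}$ is a $\tail{\bar{a}}$-$t$-cut in $\NetRemoveArc{\bar{a}}$. Reroutability of $x$ provides a $\tail{\bar{a}}$-$t$-flow of value $x(\bar{a})$ in $\NetRemoveArc{\bar{a}}$ under capacities $\rcap[\bar{a}]{x}{\cdot}$, so by the max flow/min cut theorem,
\[
  x(\bar{a}) \ \leq \ \elsum{a \in C \setminus \{\bar{a}\}} \rcap[\bar{a}]{x}{a}.
\]

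Next, I would rewrite the available capacities using $u \equiv 1$. For any $a \neq \bar{a}$,
\[
  \rcap[\bar{a}]{x}{a} \ = \ 1 - \elsum{P \in \paths[a] \setminus \pathsVia{\bar{a}}{a}} x(P) \ = \ 1 - x(a) + \elsum{P \in \pathsVia{\bar{a}}{a}} x(P).
\]
The assumption that $\bar{a}$ is not $(x,C)$-bad means that $\sum_{P \in \pathsVia{\bar{a}}{a}} x(P) = 0$ for every $a \in C$ (with $a \neq \bar{a}$ automatic since $\bar{a} \not\prec_P \bar{a}$ for simple $P$). Hence $\rcap[\bar{a}]{x}{a} = 1 - x(a)$ for all $a \in C \setminus \{\bar{a}\}$, and the previous inequality becomes
\[
  x(\bar{a}) \ \leq \ \elsum{a \in C \setminus \{\bar{a}\}}(1 - x(a)).
\]

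Finally, adding $1 - x(\bar{a})$ to both sides yields $1 \leq \sum_{a \in C}(1 - x(a))$, contradicting the hypothesis $\sum_{a \in C}(1 - x(a)) < 1$. Thus every $\bar{a} \in C$ must be $(x,C)$-bad, so $C$ is $x$-bad. I don't foresee a real obstacle here: the only subtlety is making sure that ``not $(x,C)$-bad'' kills exactly the extra term $\sum_{P \in \pathsVia{\bar{a}}{a}} x(P)$ that distinguishes $\rcap[\bar{a}]{x}{a}$ from $\rcap{x}{a}$, which is precisely what is needed to turn the reroutability bound into the strict-reroutability condition from Lemma~\ref{lem:spanning-tree-characterization}.
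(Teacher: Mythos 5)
Your proposal is correct and follows essentially the same argument as the paper: both observe that if some $\bar{a} \in C$ is not $(x,C)$-bad then $\rcap[\bar{a}]{x}{a} = 1 - x(a)$ on $C \setminus \{\bar{a}\}$, bound $x(\bar{a})$ by the capacity of the cut $C \setminus \{\bar{a}\}$ that the rerouting must cross, and add $1 - x(\bar{a})$ to both sides. The only difference is cosmetic (contrapositive phrasing versus direct contradiction).
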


\begin{proof}
  By contradiction assume $C$ is not $x$-bad. Then there must be an arc $\bar{a} \in C$ that is not $(x, C)$-bad. This implies that $\sum_{P \in \pathsVia{\bar{a}}{a}} x(P) = 0$ for every $a \in C \setminus \{\bar{a}\}$. In particular, $\rcap[\bar{a}]{x}{a} = \rcap{x}{a} = 1 - x(a)$ for all $a \in C \setminus \{\bar{a}\}$. Since all flow in the rerouting of $x$ for failure of $\bar{a}$ needs to cross $C \setminus \{\bar{a}\}$, we obtain $\sum_{a \in C \setminus \{\bar{a}\}} \rcap[\bar{a}]{x}{a} \geq x(\bar{a})$. Adding $1 - x(\bar{a})$ to both sides of this inequality yields a contradiction.
\end{proof}

\begin{lemma}\label{lem:rightmost-cut}
  Let $x$ be a flow and let $S, S' \in \mathcal{S}$ be such that $\outarcs{S}$ and $\outarcs{S'}$ are both $x$-bad. Then $\outarcs{S \cup S'}$ is an $x$-bad $t$-separating cut as well.
\end{lemma}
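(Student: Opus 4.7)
The plan is to handle the two required properties in turn. The $t$-separating claim is immediate: since $S, S' \subset V \setminus \{t\}$ are nonempty, so is $S \cup S'$, and by the standing w.l.o.g.\ assumption we have $\outarcs{S \cup S'} \neq \emptyset$, placing this cut in $\mathcal{C}$. The real content lies in establishing $x$-badness.

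To that end I would fix an arbitrary $\bar{a} \in \outarcs{S \cup S'}$ and produce a witness showing it is $(x, \outarcs{S \cup S'})$-bad. Since $\tail{\bar{a}} \in S \cup S'$, it lies in at least one of the two sets; by symmetry in $S$ and $S'$, assume $\tail{\bar{a}} \in S$. Together with $\head{\bar{a}} \notin S \cup S' \supseteq S$, this forces $\bar{a} \in \outarcs{S}$, so the $x$-badness of $\outarcs{S}$ supplies an arc $a \in \outarcs{S}$ and a path $P \in \pathsVia{\bar{a}}{a}$ with $x(P) > 0$.

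Now I would look at the suffix $P'$ of $P$ beginning at $\head{\bar{a}}$. Its first vertex $\head{\bar{a}}$ lies outside $S \cup S'$, it later visits the vertex $\tail{a} \in S \subseteq S \cup S'$, and it terminates at $t \notin S \cup S'$. Consequently the portion of $P'$ from $\tail{a}$ onward must traverse at least one arc $a^*$ with tail in $S \cup S'$ and head outside $S \cup S'$, i.e., $a^* \in \outarcs{S \cup S'}$. Because $a^*$ lies strictly after $\bar{a}$ along $P$ and $x(P) > 0$, the path $P$ is a positive-flow witness in $\pathsVia{\bar{a}}{a^*}$, proving $\bar{a}$ is $(x, \outarcs{S \cup S'})$-bad. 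Since $\bar{a}$ was arbitrary, $\outarcs{S \cup S'}$ is $x$-bad.

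The main subtlety I anticipate is ensuring that the witness arc $a^*$ truly lies strictly after $\bar{a}$ on $P$ and is distinct from $\bar{a}$, which is why I carve $a^*$ out of the suffix $P'$ rather than from $P$ as a whole. Beyond that, the proof is essentially the standard \emph{outside--inside--outside} observation for directed cuts on unions of vertex sets, applied to the single path $P$ provided by badness of $\outarcs{S}$.
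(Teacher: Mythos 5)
Your proof is correct and follows essentially the same route as the paper's: reduce $\bar a\in\outarcs{S\cup S'}$ to membership in one of $\outarcs{S},\outarcs{S'}$, take the witness path from that cut's badness, and observe that its suffix from $\tail{a}$ starts inside $S\cup S'$ and ends at $t$ outside, so it must cross $\outarcs{S\cup S'}$ at some arc strictly after $\bar a$. The only (harmless) cosmetic differences are that you verify the $t$-separating property explicitly and derive $\bar a\in\outarcs{S}$ directly rather than via the containment $\outarcs{S\cup S'}\subseteq\outarcs{S}\cup\outarcs{S'}$.
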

\begin{proof}
  Define $C := \outarcs{S}$, $C' := \outarcs{S'}$, and $C^* := \outarcs{S \cup S'}$. Let $\bar{a} \in C^*$. We will show that $\bar{a}$ is $(x, C^*)$-bad, proving the lemma. 
  Note that $C^* \subseteq C \cup C'$ and hence $\bar{a} \in C$ or $\bar{a} \in C'$. Without loss of generality assume the former. Because, $C$ is $x$-bad, $\bar{a}$ is $(x, C)$-bad.
  Therefore there must be an arc $a \in C$ and a path $P \in \pathsVia{\bar{a}}{a}$ with $x(P) > 0$. Consider the suffix $Q := P[\tail{a}, t]$ of $P$ starting with arc $a$. Observe that $Q$ starts in $S \cup S'$ but ends in $t \notin S \cup S'$.  Hence $Q$ crosses $C^*$, i.e., there is $a' \in Q \cap C^* \subseteq P \cap C^*$. Observe that $\bar{a} \prec_{P} a \preceq_P a'$, i.e., $P \in \pathsVia{\bar{a}}{a'}$, showing that $\bar{a}$ is $(x, C^*)$-bad.
\end{proof}

\paragraph{Uncrossing paths}
	 Let $P \in \paths$. For two nodes $v, w \in V$ visited by $P$ (in that order), we let $P[v, w]$ denote the subpath of path $P$ starting at $v$ and ending at $w$.
	 Given another path $Q \in \mathcal{P}$ and an arc $a \in P \cap Q$, let $P \times_a Q$ be a simple $s$-$t$-path in the concatenation of $P[s, \head{a}]$ and $Q[\head{a}, t]$.

\begin{theorem}\label{thm:equivalence}
  Let $x$ be a reroutable flow for capacities $u \equiv 1$. Then there is a strictly reroutable flow $x'$ with $\val{x'} = \val{x}$ and $x'(a) \leq x(a)$ for all $a \in A$.
\end{theorem}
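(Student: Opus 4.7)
The plan is an iterative uncrossing. If $x$ is strictly reroutable, take $x' := x$. Otherwise, by \cref{lem:spanning-tree-characterization} some $C \in \mathcal{C}$ satisfies $\sum_{a \in C}(1-x(a)) < 1$; by \cref{lem:good-cuts} this $C$ is $x$-bad; and by \cref{lem:rightmost-cut} the family $\{S \in \mathcal{S} : \outarcs{S} \text{ is } x\text{-bad}\}$ is closed under union, so has an inclusion-maximal element $S^*$. Set $C^* := \outarcs{S^*}$. Since $C^*$ is $x$-bad, every arc $\bar{a} \in C^*$ comes with some $a \in C^*$ and some $P \in \pathsVia{\bar{a}}{a}$ carrying $x(P) > 0$, so any such $P$ crosses $C^*$ at least twice.

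I propose the following one-shot uncrossing at $\bar{a}$: delete all flow on paths $P' \in \paths[\bar{a}]$, and replace it by flow on walks $P'[s, \tail{\bar{a}}] + R$, weighted as $\tfrac{x(P')\, x_{\bar{a}}(R)}{x(\bar{a})}$, where $x_{\bar{a}}$ is a (carefully chosen) rerouting of $x$ for failing arc $\bar{a}$ and $R$ ranges over a path decomposition of $x_{\bar{a}}$. Cycles that may appear can be stripped under the non-simple-paths convention of the introduction, only further reducing arc flows. A direct accounting shows that $x(\bar{a})$ drops to $0$ and that for $a' \neq \bar{a}$ the arc flow changes by $x_{\bar{a}}(a') - \sum_{P' \in \pathsVia{\bar{a}}{a'}} x(P')$. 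Using only $x_{\bar{a}}(a') \leq \rcap[\bar{a}]{x}{a'} = 1 - x(a') + \sum_{P' \in \pathsVia{\bar{a}}{a'}} x(P')$, this change is at most $1 - x(a')$, so capacities are respected.

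The heart of the argument, and the main obstacle, is upgrading this to $x'(a') \leq x(a')$ on every arc, i.e.\ forcing the above change to be nonpositive by picking a rerouting satisfying the strict bound $x_{\bar{a}}(a') \leq \sum_{P' \in \pathsVia{\bar{a}}{a'}} x(P')$---a ``restricted'' rerouting using only the capacity already committed to paths through $\bar{a}$. I expect this to follow from the maximality of $S^*$: by max-flow/min-cut, any obstruction would produce a $\tail{\bar{a}}$-$t$-cut $\bar C$ in $\NetRemoveArc{\bar{a}}$ whose restricted capacity $\sum_{a' \in \bar C}\sum_{P' \in \pathsVia{\bar{a}}{a'}} x(P')$ is strictly less than $x(\bar{a})$, and combining such a $\bar C$ with $C^*$ via the union argument in \cref{lem:rightmost-cut} should yield an $x$-bad $t$-separating set strictly containing $S^*$---a contradiction. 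Granted the restricted rerouting, the new $x'$ still satisfies $x'(a) \leq x(a)$ and remains reroutable (residual capacities $\rcap[\bar{b}]{x'}{\cdot}$ weakly improve on arcs whose flow dropped, so suitably shrunk versions of the other reroutings $x_{\bar{b}}$ work), and since $\bar{a}$ is permanently removed from the support, the iteration terminates in at most $|A|$ rounds at a strictly reroutable flow of the same value.
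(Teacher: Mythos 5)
Your reduction to the rightmost bad cut $C^*$ matches the paper's setup, but the step you yourself flag as the heart of the argument --- the existence of a ``restricted'' rerouting $x_{\bar a}$ with $x_{\bar a}(a') \le \sum_{P \in \pathsVia{\bar a}{a'}} x(P)$ for every $a'$ --- is not merely unproven, it is false. Since the paths in $\paths$ are simple, no path containing $\bar a$ can traverse a second arc leaving $\tail{\bar a}$, so $\pathsVia{\bar a}{a'} = \emptyset$ and the restricted capacity is $0$ for every arc $a' \neq \bar a$ with $\tail{a'} = \tail{\bar a}$. Any $\tail{\bar a}$-$t$-flow in $\NetRemoveArc{\bar a}$ must leave $\tail{\bar a}$ on one of these arcs, so every restricted rerouting has value $0$; on the other hand, every arc of an $x$-bad cut lies on a positive-flow path, so $x(\bar a) > 0$. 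No appeal to the maximality of $S^*$ can rescue this: the obstructing cut is already $\outarcs{\{\tail{\bar a}\}} \setminus \{\bar a\}$. And if you let the rerouting use such an arc anyway, you immediately lose $x'(a) \le x(a)$ on it, which is both part of the statement and the engine of your termination argument.

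The paper sidesteps this by never trying to reroute the flow of a single arc. It uses the badness of $C^*$ to build an auxiliary digraph on the arcs of $C^*$ (with an arc $(a, a')$ whenever some flow-carrying path crosses $C^*$ at $a$ and for the last time at $a'$), extracts a directed cycle $a_1, \dots, a_k$ with witness paths $P_1, \dots, P_k$, and swaps suffixes cyclically via $P'_i := P_{i+1} \times_{a_i} P_i$. This moves flow only onto arcs already carrying it, so $x'(a) \le x(a)$ everywhere and $\sum_a x(a)$ strictly decreases; choosing $x$ to minimize $\sum_a x(a)$ then gives a contradiction instead of an explicit iteration. Reroutability of $x'$ is preserved by a case split that your sketch also needs but glosses over: for cuts $\outarcs{S}$ with $S \not\subseteq S^*$ one uses maximality of $S^*$ together with \cref{lem:good-cuts}, and for $S \subseteq S^*$ one proves $\rcap[\bar a]{x'}{a} \ge \rcap[\bar a]{x}{a}$ by an injection between index sets of old and new paths. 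Your claim that ``residual capacities weakly improve on arcs whose flow dropped'' is not sufficient, because $\rcap[\bar b]{x}{a}$ depends on the path decomposition through the term $\sum_{P \in \pathsVia{\bar b}{a}} x(P)$ and can decrease even when $x(a)$ does not increase.
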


\begin{proof}
  W.l.o.g., assume that $x$ minimizes $\sum_{a \in A} x(a)$ among all reroutable flows $x'$ with $\val{x'} = \val{x}$ and $x'(a) \leq x(a)$ for all $a \in A$ (if this is not the case, we can replace $x$ by the flow minimizing the total arc flow).

  Define $\mathcal{S}' := \{S \in \mathcal{S} \,:\, \outarcs{S} \text{ is an $x$-bad cut}\}$. If $\mathcal{S}' = \emptyset$, 
 Lemmas~\ref{lem:spanning-tree-characterization} and~\ref{lem:good-cuts} imply that $x$ is strictly reroutable and we are done.
  Thus assume $\mathcal{S}' \neq \emptyset$ and define $S^* := \bigcup_{S \in \mathcal{S}'} S$ and $C^* := \outarcs{S^*}$. Note that Lemma~\ref{lem:rightmost-cut} implies $S^* \in \mathcal{S}'$, and that further, by construction, $S^*$ defines the rightmost bad cut, i.e., no cut $\outarcs{S}$ for $S \in \mathcal{S}$ with $S \setminus S^* \neq \emptyset$ is $x$-bad. 
  
  Next we construct a digraph $H = (V_H, A_H)$ as follows. We let $V_H := C^*$, i.e., the nodes of $H$ are the arcs of $C^*$. For every pair of distinct arcs $a, a' \in C^*$, we introduce the arc $(a, a')$ in $A_H$ if and only if there is a path $P \in \supp{x} \cap \pathsVia{a}{a'}$ such that $a'$ is the last arc of $C^*$ on $P$. Observe that, because $C^*$ is $x$-bad, every node of $H$ has an outgoing arc. Hence $H$ contains a simple directed cycle $Z$.
 Let $a_1, \dots, a_k \in C^*$ be the arcs corresponding to the nodes of the cycle $Z$, and let $P_1, \dots, P_k \in \supp{x}$ be paths corresponding to the arcs of $Z$, i.e., the paths fulfill that arc $a_i$ is the last arc of $P_i$ that crosses $C^*$, and $a_i \in P_i \cap P_{i+1}$ for each $i \in \{1, \dots, k\}$ (for ease of notation we identify $i$ and $j$ if $i \equiv j \mod k$). 
 
  Now define $P'_i := P_{i+1} \times_{a_i} P_i$ for $i \in [k]$. Let $\varepsilon := \min_i x(P_i)$; see \cref{fig:uncrossing} for an illustration. We construct a new flow $x'$ as follows:
  \begin{align*}
    x'(P) = \begin{cases}
      x(P) + \varepsilon & \text{if } P = P'_i \text{ for some } i,\\
      x(P) - \varepsilon & \text{if } P = P_i \text{ for some } i,\\
      x(P) & \text{otherwise}.
    \end{cases}
  \end{align*}
  We show that $x'$ is also a reroutable flow. First observe that $\val{x'} = \val{x}$ and that $x'(a) \leq x(a)$ for all $a \in A$.
  Now let $\bar{a} \in A$ and let $S \subseteq V \setminus \{t\}$ with $\tail{\bar{a}} \in S$  and define $C := \outarcs{S}$. First observe that if $S \setminus S^* \neq \emptyset$, then $C$ is not $x$-bad and therefore $\sum_{a \in C} (1 - x'(a)) \geq \sum_{a \in C} (1 - x(a)) \geq 1$ by Lemma~\ref{lem:good-cuts}, implying $\sum_{a \in C \setminus \{\bar{a}\}} \rcap[\bar{a}]{x'}{a} \geq x'(\bar{a})$, and therefore $x'$ is reroutable. Thus we consider the case $S \subseteq S^*$.
  We will show that in this case $\rcap[\bar{a}]{x'}{a} \geq \rcap[\bar{a}]{x}{a}$ for all $a \in C$, and therefore $x'$ is again reroutable. To this end, observe that the definition of $\rcap[\bar{a}]{x}{}$ implies
  \begin{align*}
    \rcapEq[\bar{a}]{x'}{a} - \rcapEq[\bar{a}]{x}{a} & = \ \ 
    \elsum{P \in \paths[a] \setminus \pathsVia{\bar{a}}{a}} x(P) \ - \ \ \elsum{P \in \paths[a] \setminus \pathsVia{\bar{a}}{a}} x'(P)\\
    & = \varepsilon \cdot \big(|\underbrace{\{i \,:\, P_i \in \paths[a] \setminus \pathsVia{\bar{a}}{a}\}}_{=:I}| - |\underbrace{\{i \,:\, P'_i \in \paths[a] \setminus \pathsVia{\bar{a}}{a}\}}_{=:I'}| \big).
  \end{align*}
  We show that $i \in I'$ implies $i + 1 \in I$ and therefore $|I| \geq |I'|$, which proves our claim. Consider any $i \in I'$. We observe that $a \in P_{i+1}[s, \head{a_i}]$, as $\tail{a} \in S \subseteq S^*$ and no arc in $P_{i}[\head{a_i}, t]$ has its tail in $S^*$ (recall that $a_i$ is the last arc of $P_i$ crossing $C^*$). This further implies that $P_{i+1}[s,\head{a}] \subseteq P'_i[s,\head{a}]$ and thus $\bar{a} \notin P_{i+1}[s,\head{a}]$ as $P'_i \notin \pathsVia{\bar{a}}{a}$. Therefore $P_{i+1} \in \paths[a] \setminus \pathsVia{\bar{a}}{a}$, i.e., $i+1 \in I$. Now $|I| \geq |I'|$ implies $\rcap[\bar{a}]{x'}{a} \geq \rcap[\bar{a}]{x}{a}$ and hence $x'$ is reroutable.
    
  Finally, we show that $\sum_{a \in A} x'(a) < \sum_{a \in A} x(a)$. To this end, consider any fixed $i \in [k]$. Observe that $$x(a_i) - x'(a_i) = \varepsilon \cdot \left( |\{j \suchthat a_i \in P_j\}| - |\{j \suchthat a_i \in P'_j\}| \right).$$
  Note that $P'_{j} \cap C^* \subseteq P_{j+1}$ and hence $a_i \in P'_j$ implies $a_i \in P_{j+1}$, i.e., $\{j \suchthat a_i \in P'_j\} \subseteq \{j - 1 \suchthat a_i \in P_j\}$. 
  Further note that $P'_{i - 1} = P_{i} \times_{a_{i-1}} P_{i-1}$ does not contain $a_i$, as $a_{i-1} \prec_{P_i} a_i$.
  Hence the above containment is strict and $x(a_i) - x'(a_i) > 0$.
  
  We thus have constructed a reroutable flow $x'$ with $\val{x'} = \val{x}$ and $x'(a) \leq x(a)$ for all $a \in A$ and $\sum_{a \in A} x'(a) < \sum_{a \in A} x(a)$, contradicting our initial assumption.
\end{proof}
 
\begin{figure}[t]
\hspace*{-0.7cm}%
  \begin{tikzpicture}[xscale=1.1]
    \tikzstyle{path1}=[normalEdge, solid, red]
    \tikzstyle{path2}=[normalEdge, dotted, very thick, blue]
    \tikzstyle{path3}=[normalEdge, loosely dashed]
  
    \path node[labeledNode] (s) {$s$}
    ++(2, 1) node[normalNode] (v1) {}
      edge[path1, <-] node[sloped, above] {$P_1$}  (s)
    +(2, 0) node[normalNode] (w1) {} 
      edge[path1, <-] (v1)
      edge[path3, <-] node[above] {\textcolor{black}{$a_3$}} (v1)
    ++(0, -1) node[normalNode] (v2) {}
      edge[path1, <-] (w1)
      edge[path2, <-] node[sloped, above] {$\quad P_2$} (s)
    +(2, 0) node[normalNode] (w2) {} 
      edge[path1, <-] node[above] {\textcolor{black}{$\qquad a_1$}} (v2)
      edge[path2, <-] (v2)
    ++(0, -1) node[normalNode] (v3) {}
      edge[path2, <-] (w2)
      edge[path3, <-] node[sloped, below] {$P_3$} (s)
    +(2, 0) node[normalNode] (w3) {} 
      edge[path2, <-] node[below] {\textcolor{black}{$a_2$}} (v3)
      edge[path3, <-] (v3)
      edge[path3, ->] (v1)
    ++(4, 1) node[labeledNode] (t) {$t$}
      edge[path2, <-] (w3)
      edge[path1, <-] (w2)
      edge[path3, <-] (w1);
  
    \draw (6.5, 1.5) -- ++(0, -3);
  
    \path (7, 0) node[labeledNode] (s) {$s$}
    ++(2, 1) node[normalNode] (v1) {}
      edge[path1, <-] node[sloped, above] {$P'_3$}  (s)
    +(2, 0) node[normalNode] (w1) {} 
      edge[path3, <-] node[above] {\textcolor{black}{$a_3$}} (v1)
    ++(0, -1) node[normalNode] (v2) {}
      edge[path2, <-] node[sloped, above] {$\quad P'_1$} (s)
    +(2, 0) node[normalNode] (w2) {} 
      edge[path1, <-] node[above] {\textcolor{black}{$a_1$}} (v2)
    ++(0, -1) node[normalNode] (v3) {}
      edge[path3, <-] node[sloped, below] {$P'_2$} (s)
    +(2, 0) node[normalNode] (w3) {} 
      edge[path2, <-] node[above] {\textcolor{black}{$a_2$}} (v3)
    ++(4, 1) node[labeledNode] (t) {$t$}
      edge[path2, <-] (w3)
      edge[path1, <-] (w2)
      edge[path3, <-] (w1);
    
  \end{tikzpicture}
  \caption{Uncrossing of paths on a bad cut.}
  \label{fig:uncrossing}
\end{figure}
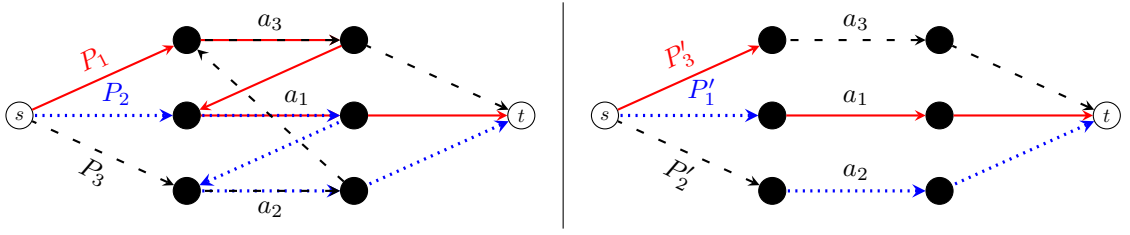

\begin{remark}\label{rem:integral-equivalence}
  The proof of \cref{thm:equivalence} preserves integrality. Therefore, if $x(P)$ is an integer multiple of $\alpha$ for every $P \in \mathcal{P}$, then $x'$ can be chosen such that also $x'(P)$ is an integer multiple of $\alpha$ for every $P \in \mathcal{P}$.
\end{remark}

\begin{remark}
  The characterization of strictly reroutable flows for unit capacities given in Lemma~\ref{lem:spanning-tree-characterization} can be extended to instances with arbitrary capacities as follows: An $s$-$t$-flow $x$ is strictly reroutable if and only if $\sum_{a \in C \setminus \{\bar{a}\}} (u(a) - x(a)) \geq x(\bar{a})$ for all $C \in \mathcal{C}$ and all $\bar{a} \in C$. 
  Furhtermore, if flow $x$ violates this constraint for some cut $C$ and an arc $\bar{a} \in C$, then $\bar{a}$ is $(x, C)$-bad. However, this no longer implies that $C$ is an $x$-bad cut. Indeed, consider the example given in \cref{fig:half-flow} when setting $u(a_1) = 2$ and $u(a) = 1$  for all $a \in A \setminus \{a_1\}$. 
Let $x$ be the reroutable (but not strictly reroutable) flow sending one unit of flow along the path $(a_1, a_2, a_3)$. The cut $C := \{a_1, a_3\}$ is not $x$-bad, but $C$ and $a_1$ violate the above constraint, preventing a strict rerouting.  
\end{remark}

\section{Computing (half-)integral solutions}\label{sec:integral-solutions}

In some application contexts, flow cannot be split into arbitrarily small pieces. This is the setting we consider in this section. We say a flow $x$ is \emph{integral}, if $x(P) \in \mathbb{Z}$ for all $P \in \paths$. We say that $x$ is \emph{\halfint} if $2x$ is integral.

For many fundamental flow problems, such as \textsc{Max Flow} or \textsc{Min Cost Flow}, integrality comes for free, i.e., as long as capacities are integral, there exists an optimal integral solution. In the case of reroutable flows, this property does not hold,
see, e.g., \cref{fig:fractionality}.
In fact, it turns out to be \NP-hard to decide whether there is a non-zero integral reroutable flow in a network.

\begin{figure}[t]
\hspace*{-0.6cm}%
\begin{tikzpicture}[scale=0.9]
  \node[labeledNode] (s) {$s$};
  \path (s)
  	++(1.5, 1.5) node[normalNode] (u11) {}
  		edge[normalEdge, <-] (s)
  	++(2, 0) node[normalNode] (u12) {}
  		edge[normalEdge, <-] (u11)
  	++(2, 0) node[normalNode] (u13) {}
  		edge[normalEdge, <-] (u12);
  \path (s)
  	++(1.5, 0.25) node[normalNode] (u21) {}
  		edge[normalEdge, <-] (s)
  	++(2, 0) node[normalNode] (u22) {}
  		edge[normalEdge, <-] (u21)
  	++(2, 0) node[normalNode] (u23) {}
  		edge[normalEdge, <-] (u22);
  \path (s)
  	++(1.5, -1.5) node[normalNode] (u31) {}
  		edge[normalEdge, <-] (s)
  	++(2, 0) node[normalNode] (u32) {}
  		edge[normalEdge, <-] (u31)
  	++(2, 0) node[normalNode] (u33) {}
  		edge[normalEdge, <-] (u32)
  	++(2, 1.5) node[labeledNode] (v) {$v$}
  		edge[normalEdge, bend left=15, <-] (s)
  		edge[normalEdge, <-] (u13)
  		edge[normalEdge, <-] (u23)
  		edge[normalEdge, <-] (u33)
  	++(2, 1.5) node[normalNode] (v11) {}
  		edge[normalEdge, <-] (v)
  		edge[secondaryEdge, bend right=40, <->] (u11)
  	++(2, 0) node[normalNode] (v12) {}
  		edge[normalEdge, <-] (v11)
  		edge[secondaryEdge, <->, bend right=40] (u21)
  	++(2, 0) node[normalNode] (v13) {}
  		edge[normalEdge, <-] (v12)
  		edge[secondaryEdge, bend right=42, <->] (u31);
   \path (v)
  	++(2, 0) node[normalNode] (v21) {}
  		edge[normalEdge, <-] (v)
  		edge[secondaryEdge, bend right=45, <->] (u12)
  	++(2, 0) node[normalNode] (v22) {}
  		edge[normalEdge, <-] (v21)
  		edge[secondaryEdge, bend right=20, <->] (u22)
  	++(2, 0) node[normalNode] (v23) {}
  		edge[normalEdge, <-] (v22)
  		edge[secondaryEdge, <->, bend right=25] (u32);
   \path (v)
  	++(2, -1.5) node[normalNode] (v31) {}
  		edge[normalEdge, <-] (v)
  		edge[secondaryEdge, bend left=30, <->] (u13)
  	++(2, 0) node[normalNode] (v32) {}
  		edge[normalEdge, <-] (v31)
  		edge[secondaryEdge, bend left=40, <->] (u23)
  	++(2, 0) node[normalNode] (v33) {}
  		edge[normalEdge, <-] (v32)
  		edge[secondaryEdge, bend left=30, <->] (u33)
  	++(1.5, 1.5) node[labeledNode] (t) {$t$}
  	  edge[normalEdge, <-] (v13)
  	  edge[normalEdge, <-] (v23)
  	  edge[normalEdge, <-] (v33);
\end{tikzpicture}

\caption{Example network in which no integral or half-integral reroutable flow is optimal. Dashed arcs represent bidirected backup links (see \cref{sec:tightness}), all arcs have unit capacities. The maximum reroutable flow value is $2$. This can only be achieved when $x(s, v) = 1$, the three $s$-$v$-paths all carry $1/3$ unit of flow, and the three $v$-$t$-paths all carry $2/3$ unit of flow. See Remark~\ref{rem:non-half-int-example} for a detailed discussion.}\label{fig:fractionality}
\end{figure}
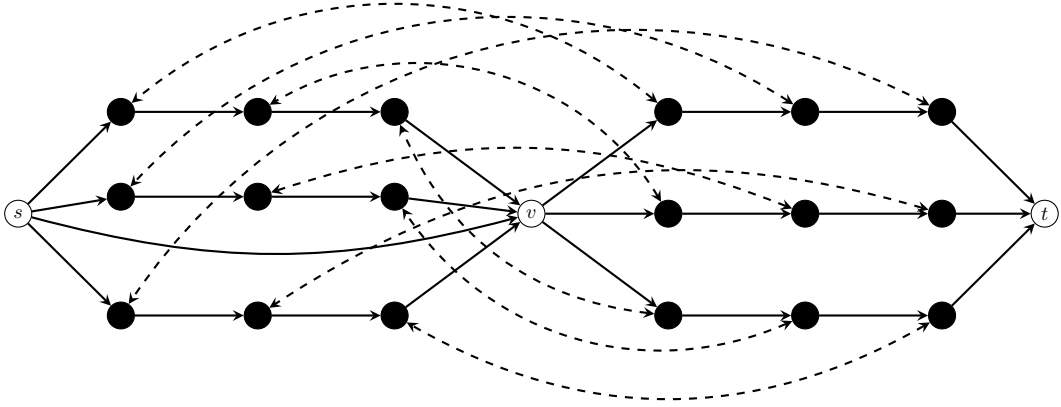

\begin{reptheorem}{thmIntegralHardness}\label{thm:integral-hard}
  It is \NP-hard to decide whether there is an integral (strictly) reroutable flow of value $1$, even when restricted to instances with $u \equiv 1$.
\end{reptheorem}

Note that this problem corresponds to sending a unit of flow along a single $s$-$t$-path. The hardness stems from a problem named {\forbiddenpairs}. The reduction is described in \cref{sec:hardness}.
While it seems that \cref{thm:integral-hard} does not give much space for positive algorithmic results, we can do much better if we relax the integrality requirement slightly. 

\begin{reptheorem}{thmCombAlgo}\label{thm:comb-alg}
Given a network with $u \equiv 1$, the algorithm given in \cref{alg:unit-demand} computes in polynomial time either a half-integral strictly reroutable flow of value $1$, or correctly determines that no reroutable flow of value $1$ exists.
\end{reptheorem}

In particular, this implies that if we are interested in sending a single unit of flow, we never need to split our flow in more than two paths.
Before we discuss the algorithm from \cref{thm:comb-alg}, let us shortly discuss the case of arbitrary capacities. As a consequence of the max flow/min cut result proven in \cref{sec:min-cut}, we obtain the following approximation.

\begin{theorem}\label{thm:half-approx}
If $u$ is integral, then there is a strictly reroutable half-integral flow $x$ with $\val{x} \geq \operatorname{OPT} / 2$, where $\operatorname{OPT}$ is the value of a maximum reroutable flow. The flow $x$ can be computed in polynomial time.
\end{theorem}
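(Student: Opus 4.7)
The plan is to observe that the construction used in the proof of Theorem~\ref{thm:cut-gap} already produces a half-integral strictly reroutable flow when capacities are integral, and that the max flow/min cut bounds shown there yield the claimed approximation factor.

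First I would recall the construction: for every $a \in A$, compute a minimum $\tail{a}$-$t$-cut $C_a$ containing $a$ and set $u'(a) := \min\{u(a), u(C_a \setminus \{a\})\}$; then let $C'$ be a minimum $s$-$t$-cut with respect to $u'$ and $x'$ a corresponding maximum $s$-$t$-flow. Since $u$ is integral, all values $u(C_a \setminus \{a\})$ and hence the modified capacities $u'$ are integral. By the standard integrality of max flow with integral capacities, we can choose $x'$ to be an integral arc flow, and a path decomposition yields an integral path flow of the same value.

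Next I would set $x := x'/2$, which is half-integral by construction. The proof of Theorem~\ref{thm:cut-gap} establishes (independently of integrality) that $x'/2$ is strictly reroutable: any putative violating cut $\bar{C}$ in $\NetRemoveArc{\bar{a}}$ would combine with $\bar{a}$ to give a $\tail{\bar{a}}$-$t$-cut of smaller capacity than $C_{\bar{a}}$, contradicting the choice of $C_{\bar{a}}$. So $x$ is a half-integral strictly reroutable flow.

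Finally, for the value bound, the same proof shows $\capacity{R, C} \leq \val{x'}$ for the R-cut $(R, C)$ built from $C'$ and the cuts $C_a$. Combining with the upper bound $\operatorname{OPT} \leq \capacity{R, C}$ from the R-cut lemma yields
\[
\val{x} \;=\; \tfrac{1}{2}\val{x'} \;\geq\; \tfrac{1}{2}\capacity{R, C} \;\geq\; \operatorname{OPT}/2.
\]
Polynomial running time is immediate: the $|A|$ minimum cut computations for the $C_a$, together with one more min-cut/max-flow computation under $u'$, suffice. The main point to verify carefully is that choosing an integral max flow for $x'$ is compatible with the strict-reroutability argument of Theorem~\ref{thm:cut-gap}; since that argument only uses $x(a) \leq u'(a)/2 \leq u(a)/2$ and the definition of $C_{\bar{a}}$, no additional obstacle arises.
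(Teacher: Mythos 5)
Your proposal is correct and follows essentially the same route as the paper: reuse the construction from the proof of Theorem~\ref{thm:cut-gap}, observe that integrality of $u$ implies integrality of $u'$ so that $x'$ can be chosen integral, and conclude that $x := x'/2$ is a half-integral strictly reroutable flow with $\val{x} \geq \capacity{R,C}/2 \geq \operatorname{OPT}/2$. The only difference is that you spell out the chain of inequalities and the compatibility check more explicitly than the paper does, which is fine.
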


\begin{proof}
Recall that in the proof of \cref{thm:cut-gap} we computed an $s$-$t$-flow $x'$ that was maximal with respect to capacities $u'(a) := \min \{u(a), u(C_{a} \setminus \{a\})\}$. We then showed that the flow~$x := x'/2$ is strictly reroutable and within a factor of $2$ of a corresponding {\rcut}. In particular, $\val{x}$ is within a factor of $2$ of the maximum reroutable flow value.
Note that if~$u$ is integral, also $u'$ is integral, and hence we can choose $x'$ to be integral, ensuring that $x$ is half-integral.
\end{proof}

\begin{lstlisting}[caption={Computing a half-integral reroutable unit demand flow},label=alg:unit-demand,captionpos=b,float,escapeinside={(*}{*)},backgroundcolor=\color{lightgray},rulecolor=\color{lightgray}]
(*$A_0 := \emptyset, A_1 := \emptyset$*)
while (*$\exists\, a \in A \setminus A_0 \,:\, A_1 \cup \{a\}$ is a $\tail{a}$-$t$-cut in $D$*)
  (*$A_0 := A_0 \cup \{a\}$*)
  (*$A_1 \leftarrow \{a' : a' \text{ is an $s$-$t$-bridge in } \NetRemoveSet{A_0}\}$*)
end while
if (*$A_0$ is an $s$-$t$-cut in $D$*)
  return (*''No reroutable flow of value $1$ exists.``*)
else
  (*Let $P_1, P_2$ be two $s$-$t$-paths in $\NetRemoveSet{A_0}$ such that $P_1 \cap P_2 = A_1$.*)
  (*Let $x$ be the flow defined by $x(P_1) = x(P_2) = 1/2$.*)
  return (*$x$*)
end if
\end{lstlisting}

\paragraph{Algorithm for computing a half-integral flow for unit demand}
A natural starting point for an algorithm is to identify arcs $a \in A$ such that $\tail{a}$ is disconnected from $t$ in $\NetRemoveArc{a}$. Obviously, no reroutable flow can send a positive amount of flow along such arcs, as after failure of $a$, the flow cannot be rerouted to $t$. Surprisingly, this simple preprocessing step can be generalized to an iterative procedure that solves the problem.

The algorithm, which is formally given in Listing~\ref{alg:unit-demand}, maintains two sets $A_0$ and $A_1$. In every iteration, it identifies an arc that cannot carry any flow in any reroutable flow and adds it to $A_0$. The set $A_1$ contains the $s$-$t$-bridges in the graph $\NetRemoveSet{A_0}$, i.e., all arcs whose removal disconnects $s$ from $t$ in that graph. Clearly, if $x(a) = 0$ for all $a \in A_0$, then every arc in $A_1$ must carry $1$ unit of flow.
If at some point $A_0$ becomes an $s$-$t$-cut, we know that no reroutable flow of value~$1$ exists. On the other hand, if the algorithm finds no more arcs to add to $A_0$ while $s$ and $t$ are still connected in $\NetRemoveSet{A_0}$, it computes two paths $P_1, P_2$ that only intersect at the bridges, and sends $1/2$ units of flow along each of them. 

\begin{proof}[Proof of \cref{thmCombAlgo}]
  To see that Algorithm~\ref{alg:unit-demand} terminates in polynomial time, observe that $|A_0|$ is increased in every iteration of the while-loop and the loop thus terminates after at most $|A|$ iterations, each of which can be carried out in polynomial time.
  
\textit{Case 1: No flow exists.} 
We now show that if Algorithm~\ref{alg:unit-demand} denies the existence of a reroutable flow of value $1$, this is indeed correct. By contradiction assume $A_0$ contains an $s$-$t$-cut but there exists a reroutable flow $x$ of value $1$. We prove by induction that at any step of algorithm the set $A_0$ fulfills the property that $x(a) = 0$ for all $a \in A_0$, yielding a contradiction. The claim is clearly true initially, when $A_0 = \emptyset$.
Now consider any iteration of the while-loop, considering arc $a$. By induction hypothesis, every $s$-$t$-path $P$ with $x(P) > 0$ must be a path in $\NetRemoveSet{A_0}$.
Note that there is an order $a_1, \dots, a_{\ell}$ of the set $A_1$ of $s$-$t$-bridges of $\NetRemoveSet{A_0}$ such that every such flow-carrying path contains all of these bridges in exactly that order. In particular $x(a_1) = \ldots = x(a_{\ell}) = 1$. 
Now consider the next arc $a$ added to $A_0$ and assume by contradiction that $x(a) > 0$. By choice of $a$ there is a $\tail{a}$-$t$-cut $C \subseteq A_1 \cup \{a\}$ in $D$. 
Note that if $C \cap A_1 = \emptyset$, there is no rerouting of $x$ in case of failure of arc $a$, as there is no $\tail{a}$-$t$-path in $\NetRemoveArc{a}$. Thus, let $a_k \in C \cap A_1$ be the bridge with the highest index $k$ on the cut. 
We distinguish two cases: 
\begin{itemize}
\item[(i)] Assume $a$ appears before $a_k$ on every flow-carrying path.
Note that $C$ is a $\tail{a_k}$-$t$-cut because $a_k \in C$ and that $\sum_{a' \in C} \rcap[a_k]{x}{a'} = 1 - x(a) < 1$. Therefore, the one unit of flow on $a_k$ cannot be rerouted when $a_k$ fails.
\item[(ii)] Now assume $a$ occurs after $a_k$ on every flow-carrying path. But then, when $a$ fails, the flow on $a$ cannot be rerouted as all edges in $C \setminus \{a\} \subseteq A_1$ occur before $a$ on every flow-carrying path and thus $\sum_{a' \in C \setminus \{a\}} \rcap[a]{x}{a'} = 0$. 
\end{itemize}
We thus deduce that $x(a) = 0$, completing the induction.

\textit{Case 2: Algorithm returns flow.}
Finally, we show that if $\NetRemoveSet{A_0}$ contains an $s$-$t$-path after completing the while-loop, then the flow $x$ returned by the algorithm is a strictly reroutable flow. First observe that two $s$-$t$-paths $P_1, P_2$ in $\NetRemoveSet{A_0}$ with $P_1 \cap P_2 = A_1$ exist by the max flow/min cut theorem, as $A_1$ contains exactly the bridges of $\NetRemoveSet{A_0}$. Now consider the failure of any arc $\bar{a} \in A \setminus A_0$. Let $C$ be a $\tail{\bar{a}}$-$t$-cut in $D$ minimizing $U(C) := \sum_{a \in C \setminus \{\bar{a}\}} \rcap{x}{a}$.
We show that $U(C) \geq x(\bar{a})$, which by max flow/min cut implies that there is a rerouting of $x$ in case of failure of $\bar{a}$.
 By termination condition of the while-loop, there is at least one arc $a' \in C \setminus (A_1 \cup \{\bar{a}\})$. Note that $x(a') \in \{0,\, 1/2\}$ and thus $U(C) \geq 1/2$. If $\bar{a} \notin A_1$, then $x(\bar{a}) \leq 1/2 \leq U(C)$. If $\bar{a} \in A_1$, we distinguish two cases. 
 \begin{itemize}
 \item[(i)] If $x(a') = 0$ then $U(C) \geq 1$ and the one unit of flow on $\bar{a}$ can be rerouted.
 \item[(ii)] If $x(a') = 1/2$, then $a' \notin A_0$. Note that $C$ is a $\tail{a'}$-$t$-cut in $D$ and thus there is $a'' \in C \setminus A_1 \cup \{a'\}$ by termination condition of the while-loop. Note that, because $a'' \notin A_1$, we have $a'' \neq a$ and $x(a'') \leq 1/2$. Thus $U(C) \geq 1$ also in this last case.
 \end{itemize}
 We conclude that $x$ is indeed strictly reroutable.
\end{proof}

\begin{remark}
Note that our proof of \cref{thm:comb-alg} does not make use of \cref{thm:equivalence}. Instead, it gives a simple alternative argument for the equivalence of reroutable and strictly reroutable flows in unit capacity networks, for the special case of unit value flows.
\end{remark}

\begin{remark}\label{rem:non-half-int-example}
\cref{thm:comb-alg} implies that, for networks with $u \equiv 1$, if there exists any reroutable flow of value~$1$, then there exists a {\halfint} strictly reroutable flow of value $1$. The example given in \cref{fig:fractionality}, however, reveals that this is no longer true for flows of higher value. To see this, consider any reroutable flow of value $\Delta$ in the depicted network.
Recall that backup links can only be used for rerouting.
Thus, all nominal flow (i.e., before failure) must pass one of the three $s$-$v$-paths or the arc $(s, v)$. Let $x_1, x_2, x_3$ be the flow values on these three paths and $x^*$ be the flow value on the arc $(s, v)$. Also, all flow must pass one of the three $v$-$t$-paths. Let $x'_1, x'_2, x'_3$ be the arc flow values on these paths. We obtain $x_1 + x_2 + x_3 + x^* = x'_1 + x'_2 + x'_3 = \Delta$. We further show that $x_i + x'_j \leq 1$ for every $i, j$: To see this, consider the rerouting when the $i$th arc of the $j$th $v$-$t$-path fails. Observe that the only backup link leads to a node on the $i$th $s$-$v$-path. Hence $x_i + x'_j \leq 1$. We deduce that $3 + x^* \geq 2\Delta$. For $\Delta = 2$ this yields the unique solution $x^* = 1$, $x_1 = x_2 = x_3 = 1/3$, and $x'_1 = x'_2 = x'_3 = 2/3$, which can be verified to be a strictly reroutable flow.
Note that the example can be generalized to arbitrarily small fractional values by introducing $k$ instead of only $3$ parallel paths in each of the two segments.
\end{remark}

\section{Hardness results}\label{sec:hardness}

In this section, we give hardness results for {\maxreroute} and some variants of the problem. 

\paragraph{Paths avoiding forbidden pairs}
Our hardness results are based on reductions from
{\forbiddenpairs}, which is defined as follows:
We are given a digraph $D' = (V', A')$, two nodes $s', t' \in V'$, and a set of forbidden arc pairs $\mathcal{F} \subseteq \{\{a, \bar{a}\} \suchthat a, \bar{a} \in A\}$. The task is to find an $s'$-$t'$-path $P$ that does not contain both arcs of any pair, i.e., $|S \cap P| \leq 1$ for all $S \in \mathcal{F}$. It is not hard to see that {\forbiddenpairs} is \NP-hard~\cite{gabow1976two}.

In all reductions that follow, we will implicitly make the following three assumptions on the digraph $D' = (V', A')$ given in the {\forbiddenpairs} instance.
  \begin{enumerate}
    \item Forbidden pairs are disjoint, i.e., $S \cap S' = \emptyset$ for $S, S' \in \mathcal{F}$.
    \item If $a \in S$ for some $S \in \mathcal{F}$, then $\outarcs{\tail{a}} = \{a\}$.
    \item If $a \in S$ and $a' \in S'$ for some $S, S' \in \mathcal{F}$, then $\head{a} \neq \tail{a'}$.
  \end{enumerate}
It is easy to see that these assumptions are without loss of generality. They can be ensured by subdividing arcs, without changing the feasibility of the {\forbiddenpairs} instance.

\subsection{General capacities}\label{sec:capacity-hardness}

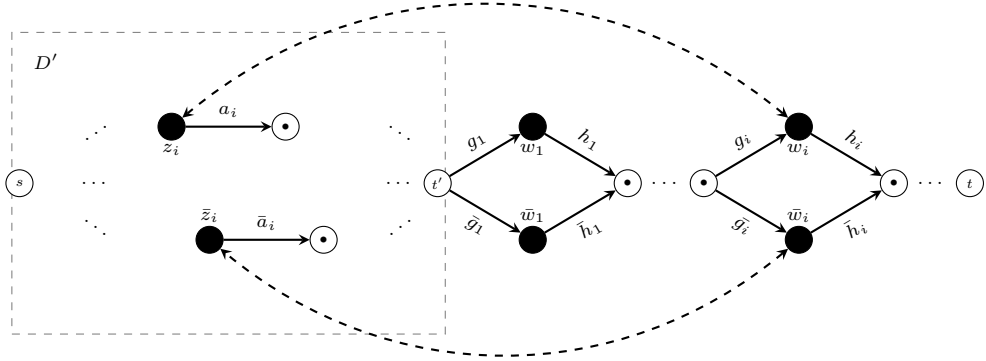
\begin{figure}[t]
  \hspace*{-0.2cm}%
  \begin{tikzpicture}[font=\scriptsize]
      \draw[black!50!white, dashed] (-0.1, 2) rectangle ++(5.7, -4);
      \node[anchor=north west, inner sep=5pt] at (0, 1.9) {$D'$};
			\path node[labeledNode] (s) {$s$}
				++(1, 0) node {\dots}
				+(0, 0.75) node {$\iddots$}
				+(0, -0.5) node {$\ddots$}
				++(1, 0.75) node[normalNode] (v1) {}
				  node[below=3pt] {$z_i$}
				++(1.5, 0) node (v2) [safeNode]
					edge[normalEdge, <-] node[above] {$a_i$} (v1)
				+(-1, -1.5) node[normalNode] (v11) {}
				  node[above=3pt] {$\bar{z}_i$}
				+(0.5, -1.5) node (v12) [safeNode]
					edge[normalEdge, <-] node[above] {$\bar{a}_i$} (v11)
				++(1.5, -0.75) node {\dots}
				+(0, 0.75) node {$\ddots$}
				+(0, -0.5) node {$\iddots$}
				++(0.5, 0) node[labeledNode] (t1) {$t'$}
				+(1.25, 0.75) node[normalNode] (w01) {}
				  edge[normalEdge, <-] node[sloped, above] {$g_1$} (t1)
				  node[below=3pt] {$w_1$}
				+(1.25, -0.75) node[normalNode] (w02) {}
				  edge[normalEdge, <-] node[sloped, below] {$\bar{g}_1$} (t1)
				  node[above=3pt] {$\bar{w}_1$}
				++(2.5, 0) node (w0t) [safeNode]
					edge[normalEdge, <-] node[sloped, above] {$h_1$} (w01)
					edge[normalEdge, <-] node[sloped, below] {$\bar{h}_1$} (w02)
				++(0.5, 0) node {$\dots$}
				++(0.5, 0) node (w) [safeNode]
				+(1.25, 0.75) node[normalNode] (w1) {}
				  edge[normalEdge, <-] node[sloped, above] {$g_i$} (w)
					edge[secondaryEdge, <->, bend right=40] (v1)
				  node[below=3pt] {$w_i$}
				+(1.25, -0.75) node[normalNode] (w2) {}
				  edge[normalEdge, <-] node[sloped, below] {$\bar{g_i}$} (w)
					edge[secondaryEdge, <->, bend left=40] (v11)
				  node[above=3pt] {$\bar{w}_i$}
				++(2.5, 0) node (wt) [safeNode]
					edge[normalEdge, <-] node[sloped, above] {$h_i$} (w1)
					edge[normalEdge, <-] node[sloped, below] {$\bar{h}_i$} (w2)
				++(0.5, 0) node {\dots}
				++(0.5, 0) node[labeledNode] (t) {$t$};
	\end{tikzpicture}

\caption{Construction for the proof of \cref{thm:general-capacities-hardness}. The dashed box contains the graph~$D'$ from the {\forbiddenpairs} instance. The arcs $a_i, \bar{a}_i$ have capacity $2$ for all $i$, all other arcs have unit capacity. In a reroutable flow of value $2$, the arcs $h_i$ and $\bar{h}_i$ must be saturated for all $i$. Any rerouting for $a_i$ has to traverse $h_i$ and any rerouting for $\bar{a}_i$ has to traverse $\bar{h}_i$.}
\label{fig:hardness-construction}
\end{figure}

\repeattheorem{thmGeneralCapacities}
\begin{proof}
We are given an instance $(D' = (V', A'), s', t', \mathcal{F})$ of {\forbiddenpairs}.
We construct an instance of {\maxreroute} as follows.
Denote the forbidden pairs in $\mathcal{F}$ by $\{a_1, \bar{a}_1\}, \dots, \{a_k, \bar{a}_k\}$ in some arbitrary order. We introduce new nodes $v_2, \dots v_{k + 1}$, $w_1, \dots, w_k$, $\bar{w}_1, \dots, \bar{w}_k$ and define $s := s'$, $v_1 := t'$, and $t := v_{k+1}$. We then introduce arcs $g_i = (v_i, w_i),\ h_i = (w_i, v_{i + 1})$ and $\bar{g}_i = (v_i, \bar{w}_i),\ \bar{h}_i = (\bar{w}_i, v_{i + 1})$ for every $i \in [k]$. We also add the arc $(s, t')$.
Furthermore, we introduce backup links from $v_{i}$ to $t$ for every $i \in [k]$ and from $v$ to $t$ for every $v \in V'$.
For every forbidden pair of arcs $\{a_i, \bar{a}_i\}$, we introduce four backup paths: from $z_i := \tail{a_i}$ to $w_i$ and vice versa and from $\bar{z}_i := \tail{\bar{a}_i}$ to $\bar{w}_i$ and vice versa. Finally, we set capacities $u(a_i) := u(\bar{a}_i) := 2$ for all $i \in [k]$ and $u(a) := 1$ for all other arcs $a$. See \cref{fig:hardness-construction} for an illustration of the construction.

We show that the {\maxreroute} instance constructed above allows for a reroutable flow of value $2$ if and only if there is an $s'$-$t'$-path avoiding the forbidden pairs in $\mathcal{F}$. In the following, we call a node $v$ \emph{safe} if there is a backup link from $v$ to $t$.

\begin{description}
\item[Sufficiency] Assume there is an $s'$-$t'$-path $P$ in $D'$ avoiding all forbidden pairs. We will construct a reroutable flow of value $2$. First, we extend $P$ to an $s$-$t$-path $Q$ as follows.
 For $i \in [k]$, attach $g_i$ and $h_i$ to $Q$ if $a_i \in P$, and attach $\bar{g}_i$ and $\bar{h}_i$ otherwise.
 As for every $i$, the path $Q$ uses either $h_i$ or $\bar{h}_i$, there is another $s$-$t$-path $\bar{Q}$ in $G$ that is arc-disjoint from $Q$ (starting with $(s, t')$ and then using the complement of $Q$ from $t'$ to $t$). We set $x(Q) := x(\bar{Q}) := 1$.
 
 We now verify that $x$ is a reroutable flow. As the nodes $w_i$, $\bar{w}_i$, $z_i$, and $\bar{z}_i$ for $i \in [k]$ are the only nodes that are not safe,  we only need to check that there is a rerouting for failure of the arcs $a_i$, $\bar{a}_i$, $h_i$ and $\bar{h}_i$.
 Because $\rcap[h_i]{x}{a_i} = 2 - x(a_i) \geq 1$, we can concatenate the backup link from $w_i$ to $z_i$ with the arc $a_i$ and the backup link from $\head{a_i}$ to $t$ to obtain a path with residual capacity $1$. This is enough to reroute the flow $x$ in case of failure of $h_i$. An identical argument applies for failure of $\bar{h}_i$.
  Further note that $a_i \in Q$ implies $h_i \in Q$ by construction of $Q$. Therefore, either $x(a_i) = 0$ or $\rcap[a_i]{x}{h_i} = 1$.
  In the former case, rerouting is trivial.
  In the latter case, we can concatenate the backup link from $z_i$ to $w_i$ with $h_i$ and the backup link from $v_{i+1}$ to $t$ to reroute the one unit of flow on $a_i$. Likewise, we observe that $\bar{a}_i \in Q$ implies $\bar{h}_i \in Q$, because $\bar{a}_i \in P$ implies $a_i \notin P$. Hence we can construct a rerouting for $\bar{a}_i$ in the same manner.

\item[Necessity] Now assume there exists a reroutable flow $x$ of value $2$. Observe that $x(h_i) = x(\bar{h}_i) = 1$ for all $i \in [k]$ (as the flow cannot use the backup links and thus has to traverse these arcs). Let $Q \in \paths$ be any path with $x(Q) > 0$ and $(s, t') \notin Q$. 
Let $P' := Q[s', t']$ be the projection of $Q$ to $D'$.
We claim that $P'$ avoids all forbidden pairs in $\mathcal{F}$. 
By contradiction assume this is not the case, i.e., $a_i, \bar{a}_i \in P' \subseteq Q$ for some $i$. 
Consider the rerouting $x_{a_i}$ for failure of $a_i$. 
Note that assumptions (i) and (ii) from the beginning of the proof imply that the only $z_i$-$t$-path in $\NetRemoveArc{a_i}$ contains $h_i$. Hence all flow in the rerouting has to traverse $h_i$, and therefore $$x(a_i) = x_{a_i}(h_i) \leq \rcap[a_i]{x}{h_i} = \elsum{P \in \pathsVia{a_i}{h_i}} x(P) \leq x(a_i).$$ 
We conclude that all the above inequalities must be fulfilled with equality, and hence $Q \in \pathsVia{a_i}{h_i}$, and in particular $h_i \in Q$. By a symmetric argument, we conclude that $\bar{a}_i \in Q$ implies $\bar{h}_i \in Q$. 
However there is no $s$-$t$-path that contains both $h_i$ and $\bar{h}_i$. This yields the desired contradiction. Hence $P'$ is a path avoiding all forbidden pairs. \qedhere
\end{description}
\end{proof}

\subsection{Computing integral flows}\label{sec:integrality}

In \cref{sec:integral-solutions} we provided an algorithm that was able to decide whether a $1/2$-integral reroutable flow of value $1$ exists in a unit capacity network. 
If however, we require the flow to be integral, this question suddenly becomes \NP-hard. Note that this problem corresponds to finding a single `reroutable' path.

\repeattheorem{thmIntegralHardness}

\begin{proof}
Let $(D' = (V', A'), s', t', \mathcal{F})$ be an instance of {\forbiddenpairs}.
We construct an instance of {\maxreroute} as follows.
We let $s := s'$ and $t := t'$.
For every forbidden pair $\{a, \bar{a}\} \in \mathcal{F}$, we add a backup path from $\tail{a}$ to $\tail{\bar{a}}$ and vice versa. For every node $v$ that is not the tail of an arc that appears in a forbidden pair, we add a backup link to $t$.
We set unit capacities to all arcs.

Let $P$ be an $s$-$t$-path path avoiding all forbidden pairs. Then the flow $x$ with $x(P) = 1$ is a reroutable flow. To see this, observe that the only nodes that are not safe are the tails of arcs participating in forbidden pairs. Consider a forbidden pair $\{a, \bar{a}\} \in \mathcal{F}$. Observe that $x(a) = 0$ or $x(\bar{a}) = 0$ and hence a rerouting exists in case of failure of either of the two arcs.

Now assume there is an integral reroutable flow of value $1$. This means that $x(P) = 1$ for an $s$-$t$-path $P$ that does not use any backup link. Assume that $a, \bar{a} \in P$ for some $\{a, \bar{a}\} \in \mathcal{F}$. W.l.o.g. assume $a <_P \bar{a}$. Note that $\{a, \bar{a}\}$ is a $\tail{\bar{a}}$-$t$-cut in $D$. Hence, when $\bar{a}$ fails, the rerouting must use $a$, i. e., $x_{\bar{a}}(a) = 1 > 0 = \rcap[\bar{a}]{x}{a}$, a contradiction.

Because of \cref{thm:equivalence} and Remark~\ref{rem:integral-equivalence} the constructed instance allows for an integral strictly reroutable flow of value $1$ if and only if it allows for a reroutable flow of value $1$. Hence the hardness result carries over to {\maxstrictreroute}.
\end{proof}

\subsection{Multiple arc failures}\label{sec:multiple-failures}

A natural generalization of {\maxreroute} and {\maxstrictreroute} allows multiple simultaneous arc failures.
When a set of arcs $S$ fails, flow is interrupted where it first encounters an arc from $S$ and has to be rerouted from that point to the sink.

To formalize this, we introduce the following notation.
For $S \subseteq A$ and $a \in A$ we define $\paths[S] := \bigcup_{\bar{a} \in S} \paths[\bar{a}]$ and $\pathsVia{S}{a} := \bigcup_{\bar{a} \in S} \pathsVia{\bar{a}}{a}$.
Let $x$ be an $s$-$t$-flow.
We extend the notion of capacity available after failure of $S$ by defining
$$\rcap[S]{x}{a} := u(a) - \elsum{\quad\ P \in \paths[a] \setminus \pathsVia{S}{a}} x(P).$$
A rerouting of $x$ for failure of $S$ consists of a collection of flows $(x_{S,\bar{a}})_{\bar{a} \in S}$ in $\NetRemoveSet{S}$, such that $x_{S,\bar{a}}$ is a $\tail{\bar{a}}$-$t$-flow for each $\bar{a} \in S$, with 
$$\val{x_{S,\bar{a}}} = \elsum{\quad\ P \in \paths[\bar{a}] \setminus \pathsVia{S}{\bar{a}}} x(P) \quad \text{ and }\quad \sum_{\bar{a} \in S} x_{S,\bar{a}}(a) \leq \rcap[S]{x}{a} \text{ for all $a \in A \setminus S$.}$$
A rerouting is strict if $\sum_{\bar{a} \in S} x_{S,\bar{a}}(a) \leq \rcap{x}{a}$ for all $a \in A$.
A flow is (strictly) $k$-reroutable if it has a (strict) rerouting for failure of any set $S \subseteq A$ with $|S| \leq k$.
We denote the corresponding problem of finding a (strictly) $k$-reroutable flow of maximum value by \textsc{Max (Strictly) $k$-Reroutable Flow}.
It turns out that dealing even with only $2$ arc failures in unit capacity networks is \NP-hard in both cases. 

\begin{theorem}\label{thm:MultiFailure}
\textsc{Max (Strictly) $k$-Reroutable Flow} is \NP-hard, even when restricted to instances with $k=2$ and $u \equiv 1$.
\end{theorem}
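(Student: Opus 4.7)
The plan is to reduce from \forbiddenpairs, in line with the other hardness proofs of \cref{sec:hardness}. Given an instance $(D' = (V', A'), s', t', \mathcal{F})$, I would build a unit-capacity digraph $D$ by taking $D'$, setting $s := s'$ and $t := t'$, attaching a backup link to $t$ from every node of $V'$ that is not the tail of an arc in some forbidden pair, and adding a coupling gadget for each pair $\{a_i, \bar{a}_i\} \in \mathcal{F}$ connecting $\tail{a_i}$ and $\tail{\bar{a}_i}$. As in all the earlier constructions, backup links carry no nominal flow in any 2-reroutable flow (the single-failure argument goes through unchanged for $k = 2$, since failing the second arc of a backup link leaves its intermediate node with no $\tail{\cdot}$-$t$-route), so every positive-flow $s$-$t$-path of a 2-reroutable $x$ restricts, after deleting backup-link arcs, to an $s'$-$t'$-walk in $D'$.

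The critical design goal of the coupling gadget is that (i) when only one of $a_i, \bar{a}_i$ fails, a rerouting exists through the partner arc, while (ii) the simultaneous failure $S = \{a_i, \bar{a}_i\}$ admits no feasible rerouting in $D \setminus S$ with the residual capacities $\rcap[S]{x}{\cdot}$ whenever any positive-flow path of $x$ uses both $a_i$ and $\bar{a}_i$. A natural way to realize (ii) without breaking (i) is to route the ``escape'' from $\tail{a_i}$ through an arc whose residual capacity collapses to zero precisely when a path using both $a_i$ and $\bar{a}_i$ carries positive flow --- in spirit reusing the bottleneck idea from \cref{thm:general-capacities-hardness}, but effecting the ``capacity doubling'' through the second failure rather than through an actual arc capacity of $2$.

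For correctness, the sufficiency direction takes an $s'$-$t'$-path $P$ in $D'$ avoiding $\mathcal{F}$, sets $x(P) := 1$ along its extension to $D$, and verifies 2-reroutability against any $S$ with $|S| \leq 2$ by a case analysis on $|S \cap P|$ and on which arcs of $S$ (if any) belong to forbidden pairs; because $P$ avoids $\mathcal{F}$, no gadget is ever driven into configuration (ii). The necessity direction assumes a 2-reroutable $x$ with $\val{x} > 0$, picks any $P \in \supp{x}$ (an $s'$-$t'$-walk in $D'$ after projection), and argues that if $P$ contained both arcs of some pair $\{a_i, \bar{a}_i\}$, then failing $S = \{a_i, \bar{a}_i\}$ would by (ii) leave the positive value $x(P)$ with no feasible rerouting, contradicting 2-reroutability. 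Hence $P$'s projection to $D'$ solves the \forbiddenpairs\ instance. The same construction yields hardness of \maxstrictreroute: the unit flow built for sufficiency is in fact strictly 2-reroutable (the backup infrastructure used for reroutings is arc-disjoint from the nominal path, so $\rcap{x}{\cdot} = 1 - x(\cdot)$ already suffices), while the necessity argument uses only the existence of a rerouting, which is implied by strict reroutability.

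The main obstacle will be the precise realization of property (ii) without invalidating (i), especially in the two-failure scenario where one failing arc lies on the nominal flow path while the other, chosen adversarially, coincides with an arc that the coupling gadget would otherwise use to reroute the first. Designing the gadget so that this edge case is always covered in unit capacities, while still collapsing the rerouting capacity to zero precisely when both arcs of a forbidden pair are removed and nominal flow traverses them both, is the core technical step; assumptions (i)--(iii) on the \forbiddenpairs\ input (forcing tails of paired arcs to have unique out-neighbours and preventing head/tail coincidences) are what make such a local gadget possible.
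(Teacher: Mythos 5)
You have the right source problem ({\forbiddenpairs}) and the right overall shape, but the proposal stops exactly where the proof begins: the coupling gadget that you yourself identify as ``the core technical step'' is never constructed, and the concrete choices you do commit to run into the very obstacle you name. If the only escape from $\tail{a_i}$ is engineered to collapse when both $a_i$ and $\bar{a}_i$ are removed, then in the sufficiency direction the adversary can fail $\{a_i,\bar{a}_i\}$ even when the chosen path uses only $a_i$, and your single unit-flow path has nowhere to go; conversely, if you leave enough slack for that case, the necessity direction loses its bite. The paper sidesteps this tension entirely: it does \emph{not} make the simultaneous failure of a forbidden pair the critical scenario. Instead it prepends to $s'$ a chain of gadget segments with two parallel length-four subpaths $(g_i,h_i,g'_i,h'_i)$ and $(\bar g_i,\bar h_i,\bar g'_i,\bar h'_i)$, adds an extra $s'$-$t'$-path of length $6$, and routes a \emph{half-integral} flow on two arc-disjoint paths $Q,\bar Q$ with $x(Q)=x(\bar Q)=1/2$ (not a single path with $x(P)=1$). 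The critical failure sets are then mixed sets such as $\{h'_i,a_i\}$: all interrupted flow must be rerouted across $h_i$, whose available capacity is $1-x(h_i)=1/2$, which forces $a_i\in P\Rightarrow h'_i\in P$ for every flow-carrying $P$, symmetrically $\bar a_i\in P\Rightarrow \bar h'_i\in P$, and no path contains both $h'_i$ and $\bar h'_i$. This is the bottleneck idea from the proof of \cref{thm:general-capacities-hardness}, but the ``doubling'' is effected by the second failure acting on a half-unit flow rather than by a capacity-$2$ arc.

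A second concrete gap: your claim that the backup-link argument ``goes through unchanged for $k=2$'' only covers the nominal flow. With two failures the adversary can destroy one arc of the nominal path \emph{and} one arc of the backup link that the rerouting was supposed to use, which breaks your sufficiency argument for ordinary backup links. The paper must (and does) replace every backup link by a $2$-backup link, i.e.\ two parallel internally node-disjoint backup links, precisely so that the rerouting infrastructure survives the second failure. Without the half-integral two-path flow, the $2$-backup links, and a worked-out gadget whose critical failure sets avoid the $\{a_i,\bar a_i\}$ trap, the proposal is a plan rather than a proof.
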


\begin{proof}
Let $(D' = (V', A'), s', t', \mathcal{F})$ be an instance of {\forbiddenpairs}.
We construct an instance $(D = (V, A), s, t, u)$ of \textsc{Max $2$-Reroutable Flow} as follows. 
Let $\{a_1, \bar{a}_1\}, \dots, \{a_{\ell}, \bar{a}_{\ell}\}$ be the forbidden pairs in $\mathcal{F}$ in some arbitrary order.
For $i \in [\ell]$ we add vertices $v_i, w_i, v'_i, w'_i$ and $\bar{w}_i, \bar{v}'_i, \bar{w}'_i$, and define $v_{\ell + 1} := s'$.
We add arcs $g_i = (v_i, w_i), h_i = (w_i, v'_i), g'_i = (v'_i, w'_i), h'_i = (w'_i, v_{i + 1})$ as well as $\bar{g}_i = (v_i, \bar{w}_i), \bar{h}_i = (\bar{w}_i, \bar{v}'_i), \bar{g}'_i = (\bar{v}'_i, \bar{w}'_i), \bar{h}'_i = (\bar{w}'_i, v_{i + 1})$.
We generalize the concept of backup links to $2$-backup links, which are simply two parallel internally node-disjoint backup links.
We add $2$-backup links from each of $v_i$, $v'_i$, and $\bar{v}'_i$ to $t'$, thus making them safe nodes.
We further add $2$-backup links $w_i$ to $w'_i$, from $w'_i$ to $z_i$, and from $z_i$ to $w_i$, as well as from $\bar{w}_i$ to $\bar{w}'_i$, from $\bar{w}'_i$ to $\bar{z}_i$, and from $\bar{z}_i$ to $\bar{w}_i$.
For every node $v \in V'$ that is not the tail of an arc that participates in a forbidden pair, we add a $2$-backup link from $v$ to $t$.
We also add an additional $s'$-$t'$-path $\bar{P}$ of length $6$ consisting of the arcs $e_1, \dots, e_6$. We add $2$-backup links from $\tail{e_2}$ to $\tail{e_4}$, from $\tail{e_4}$ to $\tail{e_6}$, and from $\tail{e_6}$ to $\tail{e_2}$. We also add $2$-backup links from each of $\tail{e_1}$, $\tail{e_3}$ and $\tail{e_5}$ to $t'$.
Finally, we define $s := v_1$, $t := t'$, and $u \equiv 1$.

We show two implications: (i) If there is an $s'$-$t'$-path in $D'$ avoiding all pairs, then there is a strictly reroutable flow of value $1$ in the constructed network $D$. (ii) If there is a reroutable flow of value $1$ in the constructed network $D$, then there is an $s'$-$t'$-path in $D'$ avoiding all pairs. Note that (i) and (ii) together also imply that a strictly reroutable flow of value $1$ exists in $D$ if and only if there is a reroutable flow of value $1$. Therefore, both \textsc{Maximum $k$-Reroutable Flow} and \textsc{Maximum Strictly $k$-Reroutable Flow} can be used to solve {\forbiddenpairs}.

\begin{itemize}
\item[(i)]
Let $P'$ be an $s'$-$t'$ path in $D'$ avoiding all forbidden pairs. Let $O$ be the $s$-$s'$-path defined by the following rule: For $i \in [\ell]$, if $a_i \in P'$, then use the segment $g_i, h_i, g'_i, h'_i$, otherwise use the segment $\bar{g}_i, \bar{h}_i, \bar{g}'_i, \bar{h}'_i$. Let $Q$ be the concatenation of $O$ and $P'$ to an $s$-$t$-path. Further, let $\bar{O}$ be the $s$-$s'$-path that uses the segments not used by $O$ and let $\bar{Q}$ be the concatenation of $\bar{O}$ and $\bar{Q}$.
We define the flow $x(Q) := 1/2$ and $x(\bar{Q}) := 1/2$.

Consider the failure of any set $S \subseteq A$ with $|S| = 2$. Note that $\rcap{x}{a} \geq 1/2$ for all $a \in A$. It is also easy to see that for any $\bar{a} \in A$ there are three disjoint $\tail{\bar{a}}$-$t$-paths in $D$ and hence at least one such path in $\NetRemoveSet{S}$. If $Q \cap S = \emptyset$ or $\bar{Q} \cap S = \emptyset$, we have to reroute at most $1/2$ unit of flow, which is possible by the above observations. Hence, consider the case that $S = \{q_1, q_2\}$ with $q_1 \in Q$ and $q_2 \in \bar{Q}$. If there is no rerouting for $S$, then there must be an arc $q_3$ such that $\{q_1,q_2,q_3\}$ separate $\tail{q_1}$ and $\tail{q_2}$ from $t$. It is easy to see that this is only possible if $\{q_1, q_2, q_3\} = \{a_i, h_i, h'_i\}$ or $\{q_1, q_2, q_3\} = \{\bar{a}_i, \bar{h}_i, \bar{h}'_i\}$ for some $i \in [\ell]$---w.l.o.g., we assume the former.
Note that $a_i \notin \bar{Q}$ and that $h_i, h'_i$ are both on the same path. But $h_i, h'_i \in \bar{Q}$ implies $a_i \notin Q$ by construction of $Q$ and the fact that $P'$ avoids forbidden pairs. Hence there must be a rerouting for $x$.

\item[(ii)] Let $x$ be a reroutable flow of value $1$. We first observe that for every $i \in [\ell]$, $x(h_i) = x(h'_i) = x(\bar{h}_i) = x(\bar{h}'_i) = 1/2$. To see this, note that the nominal flow cannot use backup links and hence $x(h_i) + x(\bar{h}'_i) = 1$. Now assume that $x(h_i) > 1/2$. Then consider the failure of $S_i := \{h'_i, a_i\}$. Note that all flow on $h'_i$ and $a_i$ must be rerouted via $h_i$ in this case, i.e.,
\begin{align}
\elsum{P \in \paths[S_i]} x(P) \ \leq \ \rcapEq[S_i]{x}{h_i} \ = \ 1 - x(h_i).\label{eq:failureS}
\end{align}

But $x(h'_i) = x(h_i) > 1/2 > \rcap[\{h'_i, a_i\}]{x}{a}$, a contradiction. Analogously, we can see that $x(\bar{h}_i) > 1/2$ is not possible. By similar arguments, namely the failure of $\{e_4, e_6\}$ we can observe that $x(e_1) = \dots = x(e_6) \leq 1/2$. Hence there must be an $s$-$t$-path $Q$ in $D$ with $x(Q) > 0$ and $e_1, \dots, e_6 \notin Q$. Consider the suffix $P' := Q[s', t']$. By contradiction assume $P'$ contains a forbidden pair of arcs $a_i, \bar{a}_i$ for some $i \in [\ell]$. Again consider the failure of $\{h'_i, a_i\}$. By \eqref{eq:failureS}, we obtain $\sum_{P \in \paths[S_i]} x(P) \leq 1/2$. Since $x(h'_i) = 1/2$, this implies that $a_i \in P$ implies $h'_i \in P$ for all $P$ with $x(P) > 0$. In particular, $h'_i \in Q$. By a symmetric argument we obtain $\bar{h}'_i \in Q$. But $h'_i, \bar{h}'_i \in Q$ is not possible by construction of the digraph $D$. Thus we arrived at the desired contradiction. \qedhere
\end{itemize}
\end{proof}

\paragraph{Acknowledgments} We thank David Adjiashvili and Marco Senatore for helpful discussions. This work was supported by the Alexander von Humboldt Foundation with funds of the German Federal Ministry of Education and Research (BMBF) and by an NSERC Discovery Grant.

\bibliographystyle{plain}
\bibliography{references}

\begin{thebibliography}{10}

\bibitem{oriolo2013online}
David Adjiashvili, Gianpaolo Oriolo, and Marco Senatore.
\newblock The online replacement path problem.
\newblock In {\em Algorithms -- ESA 2013}, volume 8125 of {\em Lecture Notes in
  Computer Science}, pages 1--12. Springer, 2013.

\bibitem{aggarwal2002multiroute}
Charu~C. Aggarwal and James~B. Orlin.
\newblock On multiroute maximum flows in networks.
\newblock {\em Networks}, 39(1):43--52, 2002.

\bibitem{ahuja1993network}
R.~K. Ahuja, T.~L. Magnanti, and J.~B. Orlin.
\newblock {\em Network flows: theory, algorithms, and applications}.
\newblock Prentice Hall, 1993.

\bibitem{AnejaChandrasekaranNair2001}
Y.~P. Aneja, R.~Chandrasekaran, and K.~P.~K. Nair.
\newblock Maximizing residual flow under an arc destruction.
\newblock {\em Networks}, 38(4):194--198, 2001.

\bibitem{baier2006length}
Georg Baier, Thomas Erlebach, Alexander Hall, Ekkehard K{\"o}hler, Heiko
  Schilling, and Martin Skutella.
\newblock Length-bounded cuts and flows.
\newblock In {\em Automata, Languages, and Programming}, volume 4051 of {\em
  Lecture Notes in Computer Science}, pages 679--690. Springer, 2006.

\bibitem{bertsimas2013power}
Dimitris Bertsimas, Ebrahim Nasrabadi, and James~B. Orlin.
\newblock On the power of randomization in network interdiction.
\newblock {\em Operations Research Letters}, 44(1):114--120, 2016.

\bibitem{BertsimasNasrabadiStiller2013}
Dimitris Bertsimas, Ebrahim Nasrabadi, and Sebastian Stiller.
\newblock Robust and adaptive network flows.
\newblock {\em Operations Research}, 61:1218--1242, 2013.

\bibitem{brightwell2001reserving}
Graham Brightwell, Gianpaolo Oriolo, and F.~Bruce Shepherd.
\newblock Reserving resilient capacity in a network.
\newblock {\em SIAM Journal on Discrete Mathematics}, 14(4):524--539, 2001.

\bibitem{chekuri2005building}
Chandra Chekuri, Anupam Gupta, Amit Kumar, Joseph Naor, and Danny Raz.
\newblock Building edge-failure resilient networks.
\newblock {\em Algorithmica}, 43(1-2):17--41, 2005.

\bibitem{chestnut2015hardness}
Stephen~R. Chestnut and Rico Zenklusen.
\newblock Hardness and approximation for network flow interdiction.
\newblock {\em Networks}, 2017.

\bibitem{sousa2007improving}
Amaro de~Sousa and Gil Soares.
\newblock {Improving load balance and minimizing service disruption on ethernet
  networks with IEEE 802.1 S MSTP}.
\newblock In {\em Workshop on IP QoS and Traffic Control}, pages 25--35, 2007.

\bibitem{DisserMatuschke2016}
Yann Disser and Jannik Matuschke.
\newblock The complexity of computing a robust flow.
\newblock Technical Report arXiv:1704.08241, 2017.

\bibitem{ford1962flows}
Lester~R. Ford and Delbert~R. Fulkerson.
\newblock {\em Flows in networks}.
\newblock Princeton Univ. Press, 1962.

\bibitem{gabow1976two}
Harold~N. Gabow, Shachindra~N. Maheshwari, and Leon~J. Osterweil.
\newblock On two problems in the generation of program test paths.
\newblock {\em IEEE Transactions on Software Engineering}, SE-2(3):227--231,
  1976.

\bibitem{gottschalk2016robust}
Corinna Gottschalk, Arie~M.C.A. Koster, Frauke Liers, Britta Peis, Daniel
  Schmand, and Andreas Wierz.
\newblock Robust flows over time: Models and complexity.
\newblock Technical report, arXiv:1608.06520, 2016.

\bibitem{grandoni2010stable}
Fabrizio Grandoni, Gaia Nicosia, Gianpaolo Oriolo, and Laura Sanit{\`a}.
\newblock Stable routing under the spanning tree protocol.
\newblock {\em Operations Research Letters}, 38(5):399--404, 2010.

\bibitem{hoffman1974generalization}
Alan~J. Hoffman.
\newblock A generalization of max flow---min cut.
\newblock {\em Mathematical Programming}, 6(1):352--359, 1974.

\bibitem{leighton1999multicommodity}
Tom Leighton and Satish Rao.
\newblock Multicommodity max-flow min-cut theorems and their use in designing
  approximation algorithms.
\newblock {\em Journal of the ACM}, 46(6):787--832, 1999.

\bibitem{matuschke2016protecting}
Jannik Matuschke, S.~Thomas McCormick, Gianpaolo Oriolo, Britta Peis, and
  Martin Skutella.
\newblock Protection of flows under targeted attacks.
\newblock {\em Operations Research Letters}, 45(1):53--59, 2017.

\bibitem{metcalfe1976ethernet}
Robert~M. Metcalfe and David~R. Boggs.
\newblock Ethernet: Distributed packet switching for local computer networks.
\newblock {\em Communications of the ACM}, 19(7):395--404, 1976.

\bibitem{minieka1973maximal}
Edward Minieka.
\newblock Maximal, lexicographic, and dynamic network flows.
\newblock {\em Operations Research}, 21(2):517--527, 1973.

\bibitem{phillips1999approximation}
Steven~J. Phillips and Jeffery~R. Westbrook.
\newblock Approximation algorithms for restoration capacity planning.
\newblock In {\em Algorithms -- ESA '99}, volume 1643 of {\em Lecture Notes in
  Computer Science}, pages 101--115. Springer, 1999.

\bibitem{shepherd2009single}
F.~Bruce Shepherd.
\newblock Single-sink multicommodity flow with side constraints.
\newblock In {\em Research Trends in Combinatorial Optimization}, pages
  429--450. Springer, 2009.

\end{thebibliography}

\end{document}